\newtheorem{theorem}{Theorem}
\newtheorem{assumption}{Asssumption}
\newtheorem{lemma}{Lemma}
\newtheorem{proposition}{Proposition}
\theoremstyle{remark}
\newtheorem{remark}{Remark}
\def\<#1,#2>{\langle #1,#2\rangle}
\newcommand{\control}{\mu}
\newcommand{\linear}{\mathcal{L}}
\newcommand{\dd}{\mathrm{d}}
\newcommand{\R}{\mathbb{R}}
\newcommand{\micu}{\mathrm{MICU}} %
\newcommand{\emt}{\mathrm{EMT}}  %
\newcommand{\adv}{\mathrm{adv}} %
\newcommand{\disp}{\mathrm{disp}} %
\newcommand{\phy}{\mathrm{GP}} %
\newcommand{\warning}{\mathrm{warn}} %
\newcommand{\alarm}{\mathrm{alarm}} %
\newcommand{\covid}{Covid}
\newcommand{\trop}{\mathrm{trop}}
\DeclareMathAlphabet{\mathbfcal}{OMS}{cmsy}{b}{n}
\DeclareMathAlphabet{\mathbbold}{U}{bbold}{m}{n}
\newcommand{\unit}{\mathbbold{1}} 
\title{Understanding and monitoring the evolution of the Covid-19 epidemic from medical emergency calls: the example of the Paris area}
\author{Stéphane Gaubert$^{1,2}$, Marianne Akian$^{1,2}$, Xavier Allamigeon$^{1,2}$, Marin Boyet$^{1,2}$\\
  Baptiste Colin$^{1,2}$, Théotime Grohens$^{1,3}$,
  Laurent Massoulié$^{1,4,5}$,  David P. Parsons$^{1}$\\
  Frédéric Adnet$^{6,7}$,
  Érick Chanzy$^6$,
  Laurent Goix$^6$,
  Frédéric Lapostolle$^{6,7}$\\
  Éric Lecarpentier$^6$, Christophe Leroy$^6$,  Thomas Loeb$^6$, Jean-Sébastien Marx$^6$\\  
  Caroline Télion$^6$,   Laurent Tréluyer$^6$
  and Pierre Carli$^{6,8}$\\[1em]\small
   \\ \small
\\\small 
  $^1$ INRIA \\\small
  $^2$ CMAP, École polytechnique, IP Paris, CNRS\\\small
  $^3$ Université de Lyon, CNRS, INSA-Lyon, %
  LIRIS, UMR5205 \\\small
  $^4$ ENS, CNRS, PSL University\\\small
  $^5$ Microsoft Research-INRIA Joint Centre\\\small
  $^6$ AP-HP \\\small
  $^7$ Université Paris XIII, Bobigny\\\small
  $^8$ Université Paris-Descartes, Paris\\[1em]
  \small Emails: $^1$: {\small\texttt{Prenom.Nom@inria.fr}}\ \ \  $^6$: {\small\texttt{Prenom.Nom@aphp.fr}}
}
\date{July 20, 2020} %
\newcolumntype{L}[1]{>{\raggedright\let\newline\\\arraybackslash\hspace{0pt}}m{#1}}
\newcolumntype{C}[1]{>{\centering\let\newline\\\arraybackslash\hspace{0pt}}m{#1}}
\newcolumntype{R}[1]{>{\raggedleft\let\newline\\\arraybackslash\hspace{0pt}}m{#1}}
\newcommand{\cN}{\mathcal{N}}
\renewcommand{\th}{\textsuperscript{th}}
\newcommand{\st}{\textsuperscript{st}}
\newcommand{\nd}{\textsuperscript{nd}}
\newcommand{\rd}{\textsuperscript{rd}}
\begin{document}

\maketitle

{\bf Abstract.}
We portray the evolution of the Covid-19 epidemic during the crisis of
March-April 2020 in the Paris area, by analyzing the medical emergency
calls received by the EMS of the four central departments of this area
(Centre 15 of SAMU 75, 92, 93 and 94). Our study reveals strong
dissimilarities between these departments.  We show that the logarithm
of each epidemic observable can be approximated by a piecewise linear
function of time. This allows us to distinguish the different phases
of the epidemic, and to identify the delay between sanitary measures
and their influence on the load of EMS.  This also leads to an
algorithm, allowing one to detect epidemic resurgences. We rely on a
transport PDE epidemiological model, and we use methods from
Perron-Frobenius theory and tropical geometry.
  \\[1em]

  \begin{center}
    {\large  Comprendre et surveiller l'évolution de l'épidémie de Covid-19 à partir des appels au }\\[1mm] {\large numéro 15: l'exemple de l'agglomération parisienne}\\[1em]
    \end{center}

  {\bf Résumé.}
  Nous décrivons l'évolution de l'épidémie de Covid-19 dans
  l'agglomération parisienne, pendant la crise de Mars-Avril 2020, en
  analysant les appels d'urgence au numéro 15 traités par les SAMU des
  quatre départements centraux de l'agglomération (75, 92, 93 et 94).
  Notre étude révèle de fortes disparités entres ces
  départements. Nous montrons que le logarithme de
  toute observable épidémique peut être approché
  par une fonction du temps linéaire par morceaux.
  Cela nous permet d'identifier les différentes phases
  d'évolution de l'épidémie, et aussi d'évaluer
  le délai entre la prise de mesures sanitaires
  et leur effet sur la sollicitation de l'aide médicale urgente.
  Nous en déduisons un algorithme permettant de détecter
  une resurgence éventuelle de l'épidémie.
  Notre approche s'appuie sur un modèle d'EDP de transport
  de l'évolution épidémique, ainsi que sur des méthodes
  de théorie de Perron-Frobenius et de géométrie tropicale.

\section{Introduction}
The outbreak of \covid-19 in France has put the national Emergency Medical System (EMS), the \textit{SAMU}, in the front line.
In the \textit{Île-de-France} region, one most affected by the epidemic, the SAMU centers of Paris and its inner suburbs experienced a major increase in the number of calls received %
and of the number of ambulance dispatches for \covid-19 patients.

We show that indicators based on EMS calls and vehicle dispatches
allow to analyze the evolution of the epidemic.
In particular, we show that EMS calls
are early signals, allowing one to anticipate vehicle dispatch.
We provide a method of short term prediction of the evolution
of the epidemic, based on mathematical modeling. 
This leads to
{\em early detection and early alarm mechanisms}
allowing one either to confirm that certain sanitary
measures are strong enough to contain the epidemic,
or to detect its resurgence.
These mechanisms
rely on simple data generally available in EMS:  numbers
of patient records tagged as \covid-19, and among these,
numbers of records resulting in medical advice,
ambulance dispatch, or Mobile Intensive Care Unit
dispatch.
We also provide a comparative description of the evolution
of the epidemic in the four central departments of the
Paris area, showing spatial dissimilarities,
including a strong variation of the doubling time,
depending on the department. 

Our approach relies on several mathematical tools
in an essential way. Indeed, the Covid-19 epidemic has
unprecedented characteristics, and, given the lack of experience
of similar epidemics, one needs to rely on mathematical models.
We use transport PDE to represent the dynamics
of Covid-19 epidemic. Transport PDE
capture epidemics with a significant time interval between
contamination and the start of the infectious phase
(in contrast, ODE models without time delays allow
instantaneous transitions from contamination to the infectious phase).
In the early stage of the epidemic, in which the majority of the
population is susceptible, this dynamics
becomes approximately linear and order preserving.
Then, it can be analyzed by methods of Perron--Frobenius
theory. Our main theoretical result shows
that the logarithm of epidemic observables
can be approximated by a piecewise linear map, with
as many pieces as there are phases of the epidemic
(i.e., periods with different contamination conditions),
see~\Cref{th-1}. This methods allows us to identify,
the phases of the epidemic evolution, and also to evaluate the
time interval between sanitary measures
and their impact on epidemic observables,
like vehicle dispatch.
The idea of piecewise linear approximation and of ``log glasses'',
a key ingredient of the present approach, arises
from tropical geometry.

The present work started on March 13\th, and
led to the algorithm presented here.
A preliminary version of this algorithm was used,
on March 20\th, to forecast the epidemic
wave, anticipating that the peak load of SAMU (which occurred around March 27\th) would be different depending on the department of the Paris area.
We subsequently applied our method to provide Assistance Publique -- Hôpitaux de Paris (AP-HP), on April 5\textsuperscript{th}, with an early report,
quantifying the efficiency of the lockdown measures
from the estimation of the contraction rate of the epidemic
in the different departments.
This algorithm is now deployed operationally
in the four SAMU of AP-HP. This work may be quickly reproduced
in any EMS.

Although it was developed for Covid-19 and for EMS calls, the present monitoring method is generic. It may also apply to other medical indicators, see \Cref{ssec-alarm}, and to other epidemics, for instance, influenza.

This paper is a crisis report, giving a unified picture of a
work done jointly by a team of physicians
of the SAMU of AP-HP and applied mathematicians from INRIA
and École polytechnique. Medical, epidemiological, and mathematical aspects
are intricated in this work. We received help
from several physicians, researchers and engineers,
not listed as authors, and also help from several
organizations. They are thanked in the acknowledgments section.

This paper should be understood as an announce. The results will be
subsequently developed in several papers, with different subsets of coauthors.
It is intended to be read both by a medical and a mathematical
audience. The first part of the paper, up to \Cref{sec-discuss}
included, and the conclusion, are intended to a broad audience.
Mathematical tools are presented in~\Cref{sec-pde}, \Cref{sec-tropical},
\Cref{sec-proba} and in the appendix.

The present work shows the epidemiological significance of the calls received by the EMS, it focuses on the mathematical modeling aspects, on the description
of the evolution of the epidemy in the Paris area, and on prediction algorithms. The current work\footnote{COVID19 APHP-Universities-INRIA-INSERM, 
Emergency calls are early indicators of ICU bed requirement during the COVID-19 epidemic,  medRxiv:2020.06.02.20117499, June 2020.}
with an intersecting set of authors, is coordinated with the present one. It focuses on medical aspects. It makes a case study of  the Covid-19 crisis of March-April 2020, in Paris, considering the EMS and the hospital services in a unified perspective.  It shows that the calls received by SAMU are early predictors of the future load on ICU. 

\section{Context}

The mission of the SAMU centers is to provide an appropriate
response to calls to the number 15,
the French toll-free phone number dedicated to medical emergencies. This service is based on the medical regulation of emergency calls, in the sense that for each patient, a physician decides which response is most appropriate. Thus, depending on the evaluation
over the phone of the severity of the case and the circumstances, the response may be a medical advice, a
home visit by a general practitioner,
the dispatch of a team of EMTs (Emergency Medical Technicians) of either a first aid association or the Fire brigade, or an ambulance of a private company. A Mobile Intensive Care Unit (MICU), staffed by a physician, a nurse and an EMT, is sent to the scene as a second or a first tier, when a life threatening problem is suspected. The role of the SAMU in the management of %
disasters or mass casualties has been described elsewhere~\cite{hirsch2015medical,baker2007multiple}. 
The city of Paris and its inner suburbs are covered by 4 departmental SAMU Center-15 : Paris (75), Hauts-de-Seine (92), Seine Saint-Denis (93), and Val-de-Marne (94), see the map on \Cref{fig-all}. They serve a population of 6.77 million inhabitants. These four Center-15 are part of the public hospital administration, AP-HP (Assistance Publique -- Hôpitaux de Paris). They operate identically and use the same computerized call management system. 
Since the outbreak of the \covid-19 epidemic, the French government
instructed the public that anyone with signs of respiratory infection or fever should not go directly to the hospital emergency room to limit overcrowding, but should call number 15 for orientation.
To comply with the recommendations of the health care authorities,
the four Center-15 applied the same procedures: after medical call regulation, only patients with signs of severity or significant risk factors were transported by EMTs and ambulances to hospitals, either to Emergency Room (ER) or newly created \covid-19 Units. The cases presenting a life-threatening emergency, mostly respiratory distress, were managed by a MICU team and then admitted directly in Intensive Care Unit (ICU).
All other cases were advised to stay at home and isolate themselves. When necessary, these patients were also eligible for a home visit by a general practitioner or a consultation appointment the following days.

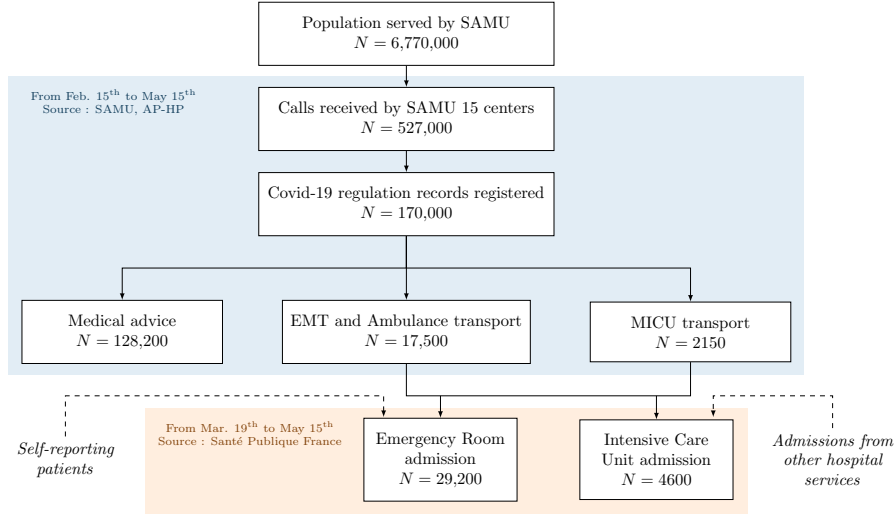
\begin{figure}[htbp]
  \begin{center}
\resizebox{.75\textwidth}{!}{
\begin{tikzpicture}
\tikzset{place/.style={draw,circle,inner sep=2.5pt,semithick}}  
\definecolor{bluePY}{HTML}{1F77B4} %
\definecolor{orangePY}{HTML}{FF7F0E} %

  \tikzset{rectangle-box/.style={rectangle, draw, fill=white, inner ysep=.3cm, minimum width=3cm}}

  \def\px{6}
  \def\py{1.8}

  \fill[bluePY,   opacity=.13] (-1.4*\px, -3*\py) -- ++(2.8*\px, 0) -- ++(0,3.5*\py) -- ++(-2.8*\px,0) -- cycle;
  \fill[orangePY, opacity=.13] (-.92*\px, -4.65*\py) -- ++(2.12*\px, 0) -- ++(0,1.25*\py) -- ++(-2.12*\px,0) -- cycle;

  \node[color=bluePY!50!black] (blue-legend) at (-1.03*\px, .2*\py) {\begin{minipage}{5cm}\centering\scriptsize From Feb. 15\textsuperscript{th} to May 15\textsuperscript{th} \\ Source : SAMU, AP-HP\end{minipage}};
  \node[color=orangePY!50!black] (or-legend) at (-.55*\px, -3.7*\py) {\begin{minipage}{5cm}\centering\scriptsize From Mar. 19\textsuperscript{th} to May 15\textsuperscript{th} \\ Source : Santé Publique France\end{minipage}};

  \node[rectangle-box] (population) at (0,\py)         {\begin{minipage}{6cm}\centering Population served by SAMU            \\ $N=\num[group-separator={,}]{6 770 000}$ \end{minipage}};
  \node[rectangle-box] (calls)      at (0,0)           {\begin{minipage}{6cm}\centering Calls received by SAMU 15 centers    \\ $N=\num[group-separator={,}]{527 000}$   \end{minipage}};
  \node[rectangle-box] (drm-cov)    at (0,-\py)        {\begin{minipage}{6cm}\centering \covid-19 regulation records registered \\ $N=\num[group-separator={,}]{170 000}$   \end{minipage}};
  \node[rectangle-box] (cov-micu)   at (\px,-2.5*\py)  {\begin{minipage}{4cm}\centering MICU transport                       \\ $N=\num[group-separator={,}]{2 150}$     \end{minipage}};
  \node[rectangle-box] (cov-emt)    at (0,-2.5*\py)    {\begin{minipage}{5cm}\centering EMT and Ambulance transport          \\ $N=\num[group-separator={,}]{17 500}$    \end{minipage}};
  \node[rectangle-box] (cov-advice) at (-\px,-2.5*\py) {\begin{minipage}{4cm}\centering Medical advice                       \\ $N=\num[group-separator={,}]{128 200}$   \end{minipage}};
  \node[rectangle-box] (er-covid)   at (.12*\px,-4*\py)      {\begin{minipage}{3cm}\centering Emergency Room admission    \\ $N=\num[group-separator={,}]{29 200}$    \end{minipage}};
  \node[rectangle-box] (icu)        at (.88*\px,-4*\py)    {\begin{minipage}{3cm}\centering Intensive Care Unit admission    \\ $N=\num[group-separator={,}]{4 600}$     \end{minipage}};

  \node (other_arr) at (1.5*\px,-4*\py)    {\begin{minipage}{3cm}\centering \textit{Admissions from other hospital services}  \end{minipage}};
  \node (self)      at (-1.2*\px,-4*\py)    {\begin{minipage}{3cm}\centering \textit{Self-reporting patients}  \end{minipage}};

  \draw[->,>=latex] (population) -- (calls);
  \draw[->,>=latex] (calls) -- (drm-cov);
  \draw[->,>=latex] (drm-cov) |- ($(drm-cov)-(0,.75*\py)$) -| (cov-micu);
  \draw[->,>=latex] (drm-cov) |- ($(drm-cov)-(0,.75*\py)$) -| (cov-emt);
  \draw[->,>=latex] (drm-cov) |- ($(drm-cov)-(0,.75*\py)$) -| (cov-advice);
  \draw[>=latex] (cov-emt) -- ($(cov-emt)-(0,.75*\py)$) -| (cov-micu);
  \draw[->,>=latex] ($(er-covid) + (0,.75*\py)$) -- (er-covid);
  \draw[->,>=latex] ($(icu) + (0,.75*\py)$) -- (icu);

  \draw[->,>=latex, dashed] (self) |- ($(self)+(.5*\px,.75*\py)$) -| ($(er-covid.center) + (-.2*\px,.5*\py)$);
  \draw[->,>=latex, dashed] (other_arr) |- ($(other_arr)+(-.1*\px,.75*\py)$) -| ($(icu.center) + (.2*\px,.5*\py)$);

\end{tikzpicture}
}
\caption{Flowchart: from calls to Center 15 to admission in hospital units. The numbers are summed over the departments 75, 92, 93 and 94 of the Paris area.}
\label{fig-flow}
\end{center}
\end{figure}

In order to maintain a rapid response when a major increase in the number of calls was observed, 
the four Center-15 implemented specific procedures.
Switchboard operators and medical staff was reinforced, and for calls related to Covid-19 an interactive voice server ---triaging the calls to dedicated computer stations--- was developed.
Patient evaluation and management were improved by introducing video consultation, sending of instruction using SMS, giving the patient the option to be called back. Prehospital EMT teams were also significantly reinforced by first aid volunteers,
and additional MICU were created.
Since January 20\textsuperscript{th} 2020 all calls and patient records related to \covid-19 were flagged in the information system of Center-15 and a daily automated activity report was produced.

\section{Methods}\label{sec-methode}
In this section, we describe the methods used in this work, in a way
adapted to a general audience. Mathematical developments appear
in~\Cref{sec-pde,sec-tropical,sec-proba} and in the appendix. 

\subsection{Classification of calls}\label{sec-classify}
In order to develop a mathematical analysis of the evolution of the epidemic,
we classified the calls tagged as \covid-19
in three categories, according to the decision taken:
\medskip
\begin{itemize}[align=left]
\item[Class 1:]  calls resulting in the dispatch of a Mobile Intensive Care Unit;%
\item[Class 2:]  calls resulting in the dispatch of an ambulance staffed with EMT;%
\item[Class 3:]  calls resulting in no dispatch decision. Such calls
  correspond to different forms of medical advice (recommendation to consult a GP, specific instructions to the patient, etc.).
\end{itemize}
\medskip

We shall denote by $Y_{\micu}(t)$
(resp.\ $Y_{\emt}(t)$ and $Y_{\adv}(t)$)
the number of MICU transports
(resp.\ the number of ambulances transport and
the number of medical advices)
on day $t$, for patients tagged with suspicion of \covid-19. We shall
call these functions of time the {\em observables}, in contrast
with $C(t)$, the actual number of new contaminations on day $t$, which cannot be
measured.
We developed a piece of software that computes these observables
by analyzing the medical decisions associated with the patient records, made accessible daily by AP-HP.

\subsection{Mathematical properties of the observables}\label{sec-model}
To analyze these observables, we shall rely on a mathematical
model.

A standard approach represents the evolution
of an epidemic by an ordinary differential equation
(SEIR ODE), representing the evolution of the population
in four compartments: ``susceptible'' (S),
``exposed'' but not yet infectious (E), ``infectious'' (I),
and finally, ``removed'' from the contamination
chain (R), either by recovery or death.  A
refinement of the SEIR model splits the S and E compartments
in sub-compartments corresponding to different age classes.
It includes a contact matrix, providing differentiated
age-dependent infectiosity rates~\cite{crepey}.
Another refinement includes additional
compartments, representing, for instance, patients
at hospital~\cite{colizza}, or 
individuals with mild symptoms~\cite{magalbiology}.

In contrast with such ODE models,
we use a partial differential equation (PDE), i.e., an infinite
dimensional dynamical system, described in~\Cref{sec-pde}.
Our approach is inspired by the
PDE model of Kermack and McKendrick~\cite{kermack-mckendrick}.
We use PDE, rather than ODE, to take into account
the presence
of {\em delays} in the contamination process:
the median incubation period of
Covid-19 is estimated of 5.1 days, with a
95\% confidence interval of 4.5-5.8,
and 95\% of patients in the range [2.2,11.5], see~\cite{laueretal},
in line with
other human coronaviruses having also long
incubation times, like SARS~\cite{varia} and MERS~\cite{virlojeux}.
(This may be compared with a median incubation time 
of 1.4 days [95\% CI, 1.3--1.6]
for the toxigenic Cholera~\cite{cholera},
or with an interval of 36 hours
between infection by pneumonic plague and first symptoms 
in Brown Norway rats, with rapid letality, 2-4 days after infection~\cite{plaguerats}.)
ODE models assume exponentially distributed transitions times
from one compartment to another. This entails
that the interval elapsed between contamination and the
time an individual becomes infectious can be arbitrarily
small, so ODE models are more adapted to epidemics
with a short incubation time. 
Using transport PDE, as is done in~\Cref{sec-pde}, takes
delays into account, allowing also one to recover
ODE models as special cases.

We shall limit here our analysis to the early stage
of the epidemic, assuming that
the population that has been infected is much
smaller than the susceptible population.
This approximation is reasonable at least in the initial
part of the epidemic,
according to the study~\cite{salje} which gives
an estimate of 5.7\% for the proportion
of the population in France that has been infected
prior to May 11\th, 2020. Then, the dynamics
becomes linear and order-preserving. The latter
property entails that the observables are an increasing function
of the size of the initial population that is either exposed or infected.

Results of Perron--Frobenius and of Krein-Rutman theory,
which we recall in \Cref{sec-epidemio}, entail
that, {\em if the sanitary measures stay unchanged},
there is a rate $\lambda$,
such that
the number of newly contaminated individuals
at day $t$ grows as $C(t)\simeq K_C \exp(\lambda t)$,
as $t\to\infty$, where $K_C$ is a positive constant.

The number $\delta:= (\log 2)/\lambda$, when it is
positive, represents the {\em doubling time}:
every $\delta$ days, the number of new 
contaminations per day doubles.
When $\delta$ is negative, the epidemic is in a phase
of exponential decay. Then, the opposite of $\delta$
yields the time after which the number of new
contaminations per day is cut by half.
For the analysis which follows, it is essential
to consider, instead of $C(t)$, its logarithm,
$\log C(t)\simeq \log K_C + \lambda t$.
The exponential growth or decay of $C(t)$
corresponds to a linear growth or decay
of the logarithm.

We shall also see in~\Cref{sec-pde} that all the epidemiological observables
evolve with the same
rate. E.g., assuming that all the patients transported
by MICU were contaminated $\tau_{\micu}$
days before the transport,
and that a proportion $\pi_{\micu}$ of the contaminated individuals
will require MICU transport, we arrive at $Y_{\micu}(t) = \pi_{\micu}C(t-\tau_{\micu})$,
and so $\log Y_{\micu}(t) \simeq \log \pi_{\micu}+ \log K_C + \lambda(t-\tau_{\micu})$.
Similar formul\ae\ apply to $Y_{\emt}$ and $Y_{\adv}$,
and to other observables based for instance on
ICU admissions or deceases. A finer
model of observables, taking into account a distribution of times $\tau_{\micu}$,
  instead of a single value, is presented in~\Cref{subsec-observables}.

For the analysis which follows, we shall keep in mind that
{\em the logarithm of all the observables is asymptotically
linear as $t\to \infty$}, and that {\em the rate, $\lambda$,
is independent of the observable}. 

\subsection{Piecewise linear approximation of the logarithm of the observables}
\label{sec-piecewise}
When the sanitary measures change, for instance,
when lockdown is established, the rate $\lambda$
changes. So, the logarithm
of the observables cannot be approximated
any more by a linear function.
However, a general result, stated as \Cref{th-1} below,
shows that this logarithm can be approximated
by a {\em piecewise linear function} with
as many linear pieces as there are
phases of sanitary policy.
This result stems from the order preserving
and linear character of the epidemiological
dynamics, and so, it holds for a
broad class of epidemiological models; several
examples of such models are discussed in~\Cref{sec-pde}.

In the Paris area, there are three relevant sanitary
phases to consider from February to May, 2020: 
initial growth (no restrictions); ``stade 2'' (stage 2)
starting on Feb.\ 29\textsuperscript{th} (prevention measures), and then lockdown from March 17\textsuperscript{th}
to May 11\textsuperscript{th}. Sanitary phases are further described
in~\Cref{sec-delay}.

Since the number $\nu$ of sanitary phases is known
(here $\nu=3$), we can infer the different values of $\lambda$ attached
to each of these phases, 
by computing the best piecewise linear approximation, $\linear(t)$
with at most $\nu$ pieces of the logarithm of an observable $Y(t)$.
To compute a robust approximation, we minimize
the $\ell_1$ norm, $\sum_{t} |\linear(t)- \log Y(t)|$,
where the sum is taken over the days $t$
in which the data are available.
Finding the best approximation $\linear$ is
a difficult optimization problem, for
the objective function is both non-smooth and non-convex.
Methods to solve this problem are
discussed in~\Cref{appendix-2}.
\subsection{Epidemic alarms based on doubling times}
\label{ssec-alarm}
To construct epidemic alarms, we shall compute
a linear fit,  $\linear(t)=\alpha+\beta t$,
to the variables $\log Y(t)$, where $Y$ is an epidemic observable.
The principle is to trigger an alarm
when the doubling time becomes positive, or equivalently, when the slope $\beta$
becomes positive.

Assuming that values of $Y(t)$ are known over a temporal window,
there are simple ready-to-use methods for computing estimates $\hat{\beta}$ for the slope $\beta$. 
We can also determine the probability $p^+$
that the slope is positive.
These methods are detailed in~\Cref{sec-appendixC}.
On their basis, we propose the following {\bf alarm raising mechanism,
  allowing one to deploy a gradual response}.

This mechanism relies on the two following observables,
$Y_{\adv}$, the number of calls resulting in medical advice,
and $Y_{\disp}:=Y_{\emt}+Y_{\micu}$, the number of dispatched vehicles.
The consolidation of the observables $Y_{\emt}$ and $Y_{\micu}$
is justified, because the two time
series both correspond to the stage of aggravation, albeit with different
degrees, and so they evolve more or less at the same time.

First define a temporal window of days $t$ over which the linear fit
$\linear_{\adv}(t)=\alpha_{\adv}+\beta_{\adv} t$ to $\log Y_{\adv}(t)$ is made.
By default we consider the last ten days prior to the current day.
Similarly,  we compute a linear fit
$\linear_{\disp}(t)=\alpha_{\disp}+\beta_{\disp} t$ to $\log Y_{\disp}(t)$
over the same time window.

Our algorithm will generate both a {\em warning} and
{\em alarms}. A warning is a mere incentive to be careful.
An unjustified warning is bothersome but generally harmless,
so we accept a high probability of false positive
for warnings. An alarm may imply some actions,
so we wish to avoid false alarms.
For this reason, we shall consider two different probability thresholds,
$\vartheta_\alarm$ and $\vartheta_\warning$, say
$\vartheta_\alarm = 75\%$ and $\vartheta_\warning=25\%$.
With this setting, we will be warned as soon as the probability
of the undesirable event is $\geq 25\%$, and we will be alarmed
when the same probability becomes $\geq 75\%$.
Of course, these thresholds can be changed, depending
on the risk level deemed to be acceptable.
We shall denote by $p^+_\adv$ the probability
that the slope $\beta_{\adv}$ is positive,
and by $p^+_{\disp}$ the probability that $\beta_{\disp}$
is positive.
These probabilities are evaluated on the basis of statistical assumptions detailed in \Cref{sec-appendixC}.

\begin{enumerate}
\item A {\bf warning} is provided when $p^+_\adv\geq \vartheta_\warning$,
  meaning that the probability that the slope
  $\beta_\adv$   of the curve of the logarithm of the
  {\em calls for medical
advice} over the corresponding time window  
  be positive is at least $\vartheta_\warning$.
  This should be interpreted as a mere warning
  of epidemic risk: choosing $\vartheta_\warning$ as above,
  the odds are at least 25\% that the epidemic is growing.
\item  This warning is subsequently transformed
  into an {\bf alarm} when $p^+_{\adv}\geq \vartheta_\alarm$.
  Choosing $\vartheta_\alarm$ as above, the odds
  that the epidemic is growing are now
at least $75\%$.
\item Such an alarm is then subsequently transformed into a
  {\bf confirmed alarm}
  if we still have $p^+_{\adv}\geq \vartheta_\alarm$, and if, in addition,
  $p^+_{\disp}\geq \vartheta_\alarm$, meaning that the probability
  that the slope of the logarithm of the curve of ambulances and MICU dispatches be positive is now above $\vartheta_\alarm$.
  Again, this estimate is defined in terms of a time window over which $\beta_{\disp}$ is estimated.
  We use the same default values of ten days and $\vartheta_\alarm$ as above.
\end{enumerate}
As shown in~\Cref{sec-data}, the indicators based on vehicle
dispatch are by far less noisy than the indicators based
on calls for medical advices, but their evolution
is delayed. This is the rationale
for using medical advice for an early warning and early alarm,
and then vehicle dispatch for confirmation.

Instead of considering the probability $p^+$,
we could consider the upper and lower bounds of a confidence
interval $[\beta^-_\epsilon,\beta^+_\epsilon]$
for the estimated slope $\beta$,  with a probability threshold $\epsilon$.
Then we may, trigger a warning when $\beta^+_\epsilon\geq 0$, and an alarm
when $\beta^-_\epsilon\geq 0$. This leads to an essentially equivalent
mechanism. We prefer the algorithm above as it allows to
interpret the thresholds in terms of false positives and false negatives.

Given the severity of the risk implied by Covid-19,
it may be desirable to complete the previous alarm, based
only on tail probabilities of the slope, by a different type of alarm,
based on a threshold of doubling time, $D$.
The alarm will be triggered if the odds
that the doubling time be positive and smaller than $D$
are at least one half. An indicative value of $D$
might be 14 days: a doubling of the number of
arrivals of Covid-19 patients in hospital services 
every 14 days may be quite challenging, justifying
an alarm, and the slope corresponding to this
doubling time seems significant enough to avoid
false alarms. Again, the value of $D$
can be changed arbitrarily depending
on the acceptable level of risk.
Moreover, this other type of alarm
can still be implemented in two stages: early
alarm, with the medical advice signal,
and then confirmed alarm, with the vehicle dispatch signal.

In addition, \Cref{sec-appendixC} provides more sophisticated ready-to-use methods for obtaining sharper confidence intervals or probabilities
for the slope $\beta$,
resulting in more precise alarm mechanisms,  when different time series
are available. We require, however,
that these series correspond to events occurring approximately
at the same stage in the pathology unfolding.
Here, we used the trivial aggregator,
$Y_{\disp}=Y_{\emt}+Y_{\micu}$. There is an optimal way to mix different
series to minimize the variance of the composite estimator, explained
in~\Cref{sec-appendixC}.

This methodology is generic. It could thus also apply to obtain a sharper confidence interval for the early indicator by combining its estimate $\hat{\beta}_{\adv}$ with that of other time series associated with signals that correspond to the same stage in pathology unfolding. Specifically, the count $Y_{\phy}(t)$ of patients consulting general practitioners  for recently developed \covid-19 symptoms, if available, provides such a signal. A linear fit to $\log Y_{\phy}(t)$ would then yield an estimate $\hat{\beta}_{\phy}$ which can be combined with $\hat{\beta}_{\adv}$ to refine the corresponding confidence interval. In this way, we can
mix several early but noisy indicators to get an
early but less noisy consolidated indicator.

\section{Results -- data analysis}\label{sec-data}
\subsection{Key figures and graphs}
From February 15\th~to May 15\th, we counted 
a total of \num[group-separator={,}]{170166} patient files tagged with a suspicion of \covid-19,
distributed as follows in the different departments:
\num[group-separator={,}]{53646} in Dep.~75; \num[group-separator={,}]{36721} in Dep.~92;
\num[group-separator={,}]{49703} in Dep.~93; and \num[group-separator={,}]{30096} in Dep.~94.

The flow of calls to the SAMU of the Paris area, and its impact on ER and ICU,
is shown on Figure~\ref{fig-flow}. The data concerning the ER and the ICU
are taken from the governmental website SPF (Santé Publique France)~\cite{SPF},
it is available only from March 19\th.

On Figure~\ref{fig-differenttypes}, we represent, in logarithmic ordinates,
the numbers of events of different types, summed over the four departments
of the Paris area (75, 92, 93 and 94):
(i) the number of patients calling the SAMU (including 
patients not calling for \covid-19 suspicion);
(ii) the number of calls tagged as \covid-19 not resulting in a vehicle
dispatch (i.e., as discussed in \S\ref{sec-classify}, all kinds of medical advices);
(iii) the number of calls tagged as \covid-19
resulting in an ambulance or MICU dispatch,

We obtained the data (i) by analyzing the phone operator
log files. Since a patient may call the Center 15 several times,
we eliminated multiple calls to count unique patients.
To compute data (ii) and (iii), we developed
a software to analyze the ``medical decision'' field
of the regulation records.

Using logarithmic ordinates is essential on~\Cref{fig-differenttypes},
as it allows to visualize on the same graph signals
of different orders of magnitude (e.g, there is a ratio of 20 between
the peak number of patients calling and the peak number
of vehicles dispatched).

\begin{figure}[htbp]
  \begin{center}
    \includegraphics[scale=0.42,trim={1cm, 1cm, 1cm, 1cm}, clip]{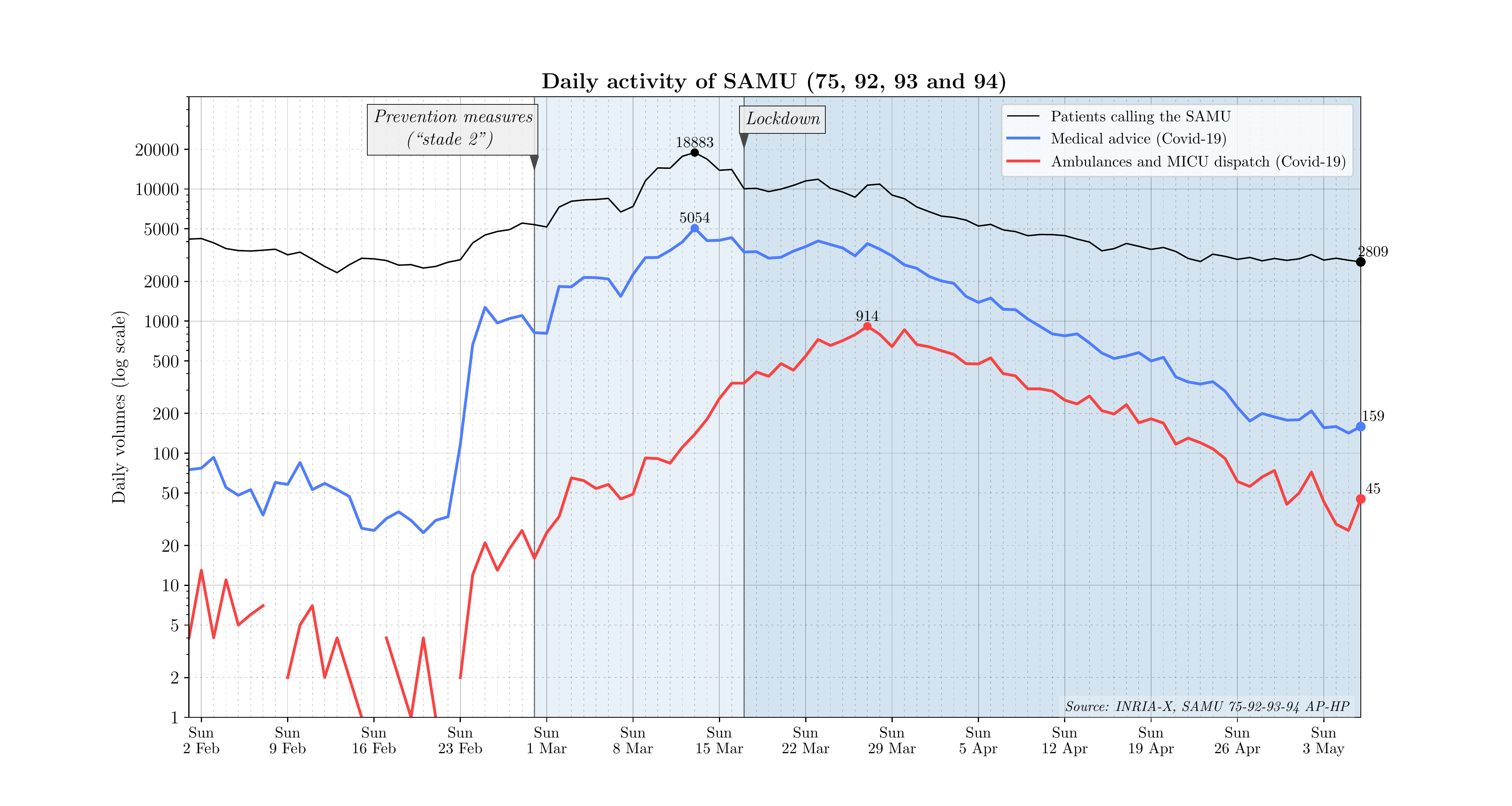}
  \end{center}
  \caption{Number of patients calling Center-15, of MICU and ambulances dispatch for \covid-19 suspicion
   in the Paris area (departments 75, 92, 93 and 94)}
    \label{fig-differenttypes}
\end{figure}

The evolution of the number of vehicles dispatched (MICU and ambulances)
is shown on Figure~\ref{fig-all}, for each department of the Paris area
(still with logarithmic ordinates).

\begin{figure}[htbp]
  \begin{center}
    \begin{tikzpicture}
      \node[anchor=south west,inner sep=0] at (0,0) {
        \includegraphics[scale=0.57,trim={2cm, 1cm, 0cm, .3cm}, clip]{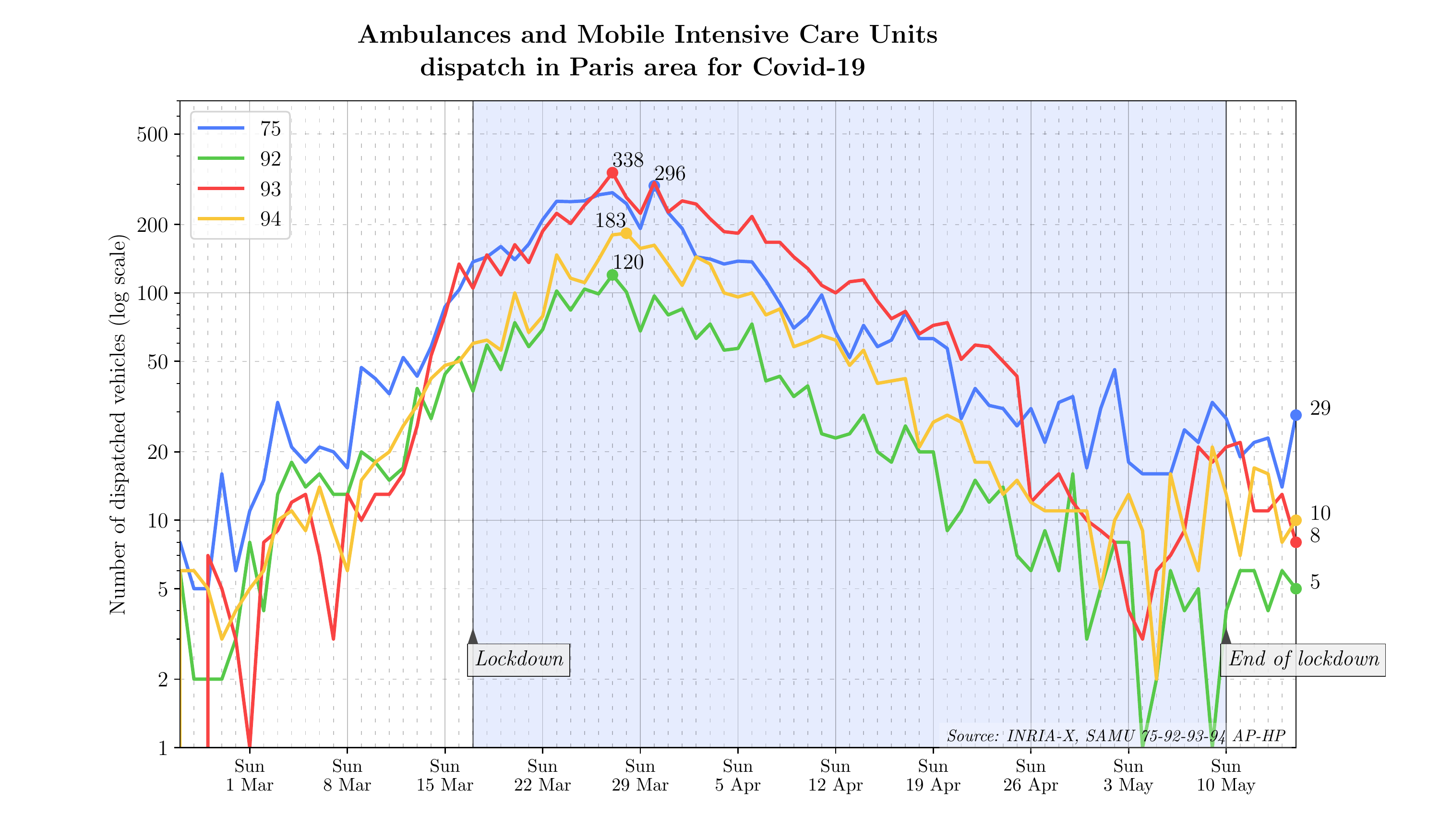}
    };
    \node[anchor=south west,inner sep=0] at (10.3,6.1) {
      \includegraphics[scale=0.4,trim={0cm, 0cm, 0cm, 0cm}, clip]{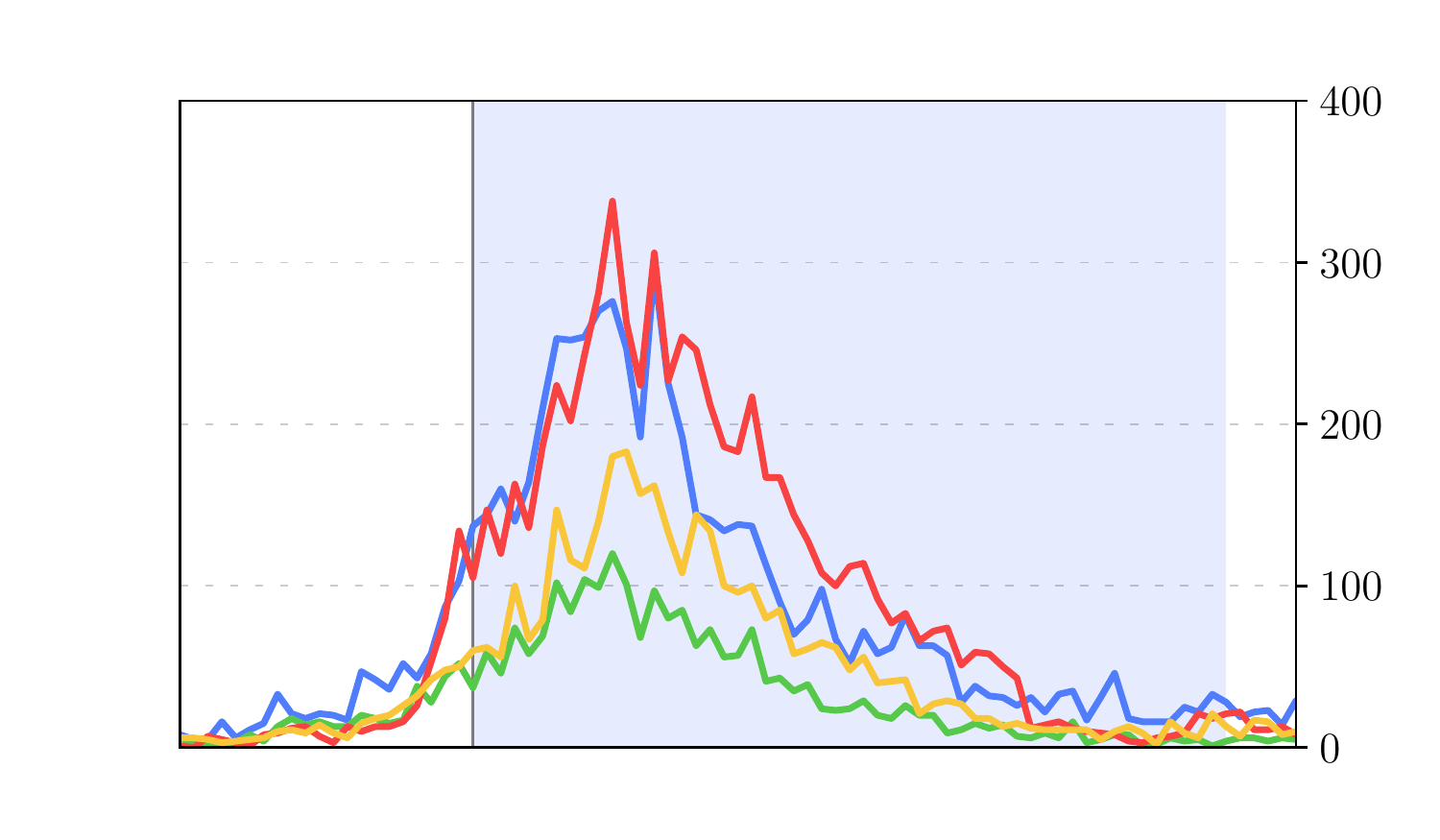}
  };

    \end{tikzpicture}
  \end{center}
  \vspace{-.4cm}
  \caption[Departmental comparison of vehicles dispatch]{ 
\begin{tabular}{L{12cm}C{1.8cm}} Comparison of the evolution of the epidemic in the different departments of the Paris area: numbers of vehicles dispatch by department. The figure inset displays the same curves in usual linear ordinates to keep in mind the different magnitudes at stake. A map of the Paris area, showing the departments 75, 92, 93, 94, is at the bottom right of the figure. & 
    \vspace{-.3cm}
    \begin{tikzpicture}
      \node[opacity = .6] at (0,0) {\includegraphics[scale = .33]{./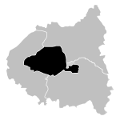}};
      \node[opacity = .6] at (0,0) {\includegraphics[scale = .33]{./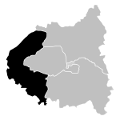}};
      \node[opacity = .6] at (0,0) {\includegraphics[scale = .33]{./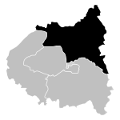}};
      \node[opacity = .6] at (0,0) {\includegraphics[scale = .33]{./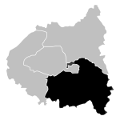}};
      \node[white] at (-0.12,-0.02) {\tiny 75};
      \node[black] at (-0.57,-0.45) {\tiny 92};
      \node[black] at ( 0.1,  0.6) {\tiny 93};
      \node[black] at ( 0.8, -0.4) {\tiny 94};
    \end{tikzpicture}
   \end{tabular}
  }
\label{fig-all}%
\end{figure}

We provide in Table~\ref{table-div} the doubling times of
the number of vehicles dispatched (ambulances and MICU), for the
different departments, measured in days (abbreviation ``d'').

\begin{table}[htbp]
  \begin{center}
    \begin{tabular}{ccccc}
      & Feb. 28\th -- Mar. 15\th  & Mar. 15\th -- Mar. 29\th &  Mar. 29\th -- April 24\th\\\hline %

      75 & 5.9 d         & 9.8  d           &  -9.4  d         \\  %
      92 & 4.9 d         & 10.6 d           &  -8.3  d         \\  %
      93 & 4.2 d         & 8.5  d           &  -10.2 d        \\  %
      94 & 4.6 d         & 6.9  d           &  -7.7  d         \\  %
      \end{tabular}
  \end{center}
    \caption{Doubling time of the number of MICU and ambulances dispatched, for different periods, for each department, obtained by a least squares approximation of the logarithm of this number. The opposite of a negative doubling time yields the halving time.}\label{table-div}
\end{table}

We now draw several conclusions from the previous analysis.

\subsection{The increase in the number of calls for medical advice provides an early, but noisy, indicator
of the epidemic growth}\label{sec-noisy}
As shown in~\Cref{fig-differenttypes},
the peak of the number of calls for medical advice was on March 13\th. However, this date, four days before the lockdown (March 17\th),
is not consistent with epidemiological modeling.
This peak seems rather to be caused by announcements to the population,
see the discussion in \Cref{sec-announce}.

\subsection{The epidemic kinetics vary strongly across neighboring departments}
\label{sec-spatial-varies}
In the initial phase of the epidemic (Feb. 28\th--March 15\th), the doubling time was significantly shorter in the 93 department
(4.2 d) than in central Paris (5.9 d). The 93 department, with 1.6M inhabitants,
is less populated than central Paris (2.1M inhabitants).
Another difference between the departments concerns mobility.
Movement from the population from central Paris to smaller towns and cities
or to countryside were observed,
after March 12\th, the date of the first presidential
address concerning the Covid-19 crisis.

In order to quantify this mobility, we requested information
from Enedis, the company in charge of the electricity distribution
network in France, and also from Orange and SFR, two
operators of mobile phone networks.

Enedis provided us with an estimation of the departure rates
of households, based on a variation of the volume of electricity
consumed, aggregated at the level of departments and districts
(i.e., {\em arrondissements}).

SFR provided us with estimates of daily flows
from the Paris area to other regions, again aggregated at the scale
of the departments or districts, based on mobile phone activity,
confirming this decrease of population.

Orange Flux Vision provided us with daily population estimates,
at the scale of department, based on mobile phone activity.
By March 30\th,
the population, during the night, was estimated
to be 1.6M inhabitants in central Paris, versus
1.35M in the 93. 

However, the epidemic peak was higher in the 93 than in the 75
(338 dispatches versus 296).
The contraction rate in the period after the peak (March 29\th--Apr. 24\th)
was also smaller in the 93, with a halving time of 10.2 days, to be compared
with 9.4 days in the 75. Possible explanations
for these strong spatial discrepancies are
discussed in~\Cref{sec-discrepancies}.

\section{Results -- mathematical modeling}
\subsection{Delay between implementation of sanitary policies and its effect on hospital admissions}\label{sec-delay}%
We explained in~\Cref{sec-piecewise}, based on \Cref{th-1} below,
that the logarithm of an epidemic observable can be approached
by a piecewise linear map with as many pieces as there
are stages of sanitary measures.

So, we look for the best approximation, in the $\ell_1$ norm,
of the logarithm of the number of vehicles dispatched
(ambulances and MICU),
by a piecewise linear map with at most three pieces.
This best approximation is shown on~\Cref{p-phases}.
It is computed by the method of~\Cref{appendix-2}. 

In order to evaluate the influence of a sanitary
measure on the growth of the epidemic,
an approach is to compare the date of the measure
with the date of the change of slope of the logarithmic
curve, consecutive to the measure. This method
is expected to be more robust than,
for instance, a comparison of peak values, because
the best piecewise-linear approximation is obtained
by an optimization procedure {\em taking the whole
sequence into account}. Indeed, a local corruption
of data will not change significantly the date of change of slope,
if the problem is {\em well conditioned}. This is the
case in particular if the difference between consecutive
slopes is sufficiently important. In other words,
we can identify in a more robust manner the time of effect
of a strong measure than of a mild one.

Let us recall the main changes of sanitary measures
in the Paris area, between February and May 2020. We
may distinguish the following phases:

    \begin{itemize}
    \item[-] {\em Initial development of the epidemic}, no general sanitary measures in
      the Paris area, until Feb 29\th, first day of so-called
      ``stade 2'' by the authorities (following ``stade 1'' in which
      measures intended to prevent the introduction of the virus
      in France -- like quarantine in specific cases--  were taken).
    \item[-] {\em ``Stade 2'' (stage 2) measures}:
      general instructions of social
      distancing
      given to the population (e.g., not shaking hands), ban on
      large gatherings.
      Moreover, some large companies created crisis
      committees, and decided to take more
      restrictive measures than the ones required
      by the authorities, including for instance banning meetings with
 more of 10 people, and banning business travels. Restrictive measures
      in companies were deployed gradually during
      the work week from March 2\nd~to March 6\th.
    \item[-] {\em School closure} on March 16\th.
    \item[-] {\em Lockdown} on March 17\th. The lockdown ended
      on May 11\th, throughout the country.
    \end{itemize}

    Hence, we may interpret the variations in the slope
    in the piecewise linear approximation of the logarithm
    of the number of ambulances and MICU dispatched, shown
    on~\Cref{p-phases}, as the effect of sanitary measures.
    The dates where the slope changes are represented in the figure
    by dotted lines. Thus, the latest breakpoint
    of the piecewise linear approximation of the
    75 curve (in blue) arises on March 26\th, to be compared
    with March 30\th~in the 93 (red curve). The dates of 
    breakpoints in the 92 and 94 are intermediate.
    Given the first strong measure (closing of schools)
    was taken on March 16\th, we may evaluate the delay
    between a sanitary measure and its effect on the
    ambulances and MICU dispatch to be between
    10 and 14 days. This corresponds to a delay
    between contamination and occurrence of severe symptoms.

    \begin{figure}[htbp]
      \begin{center}
        \includegraphics[scale=0.55, trim={1cm, 1cm, 2cm, 1cm}, clip]{./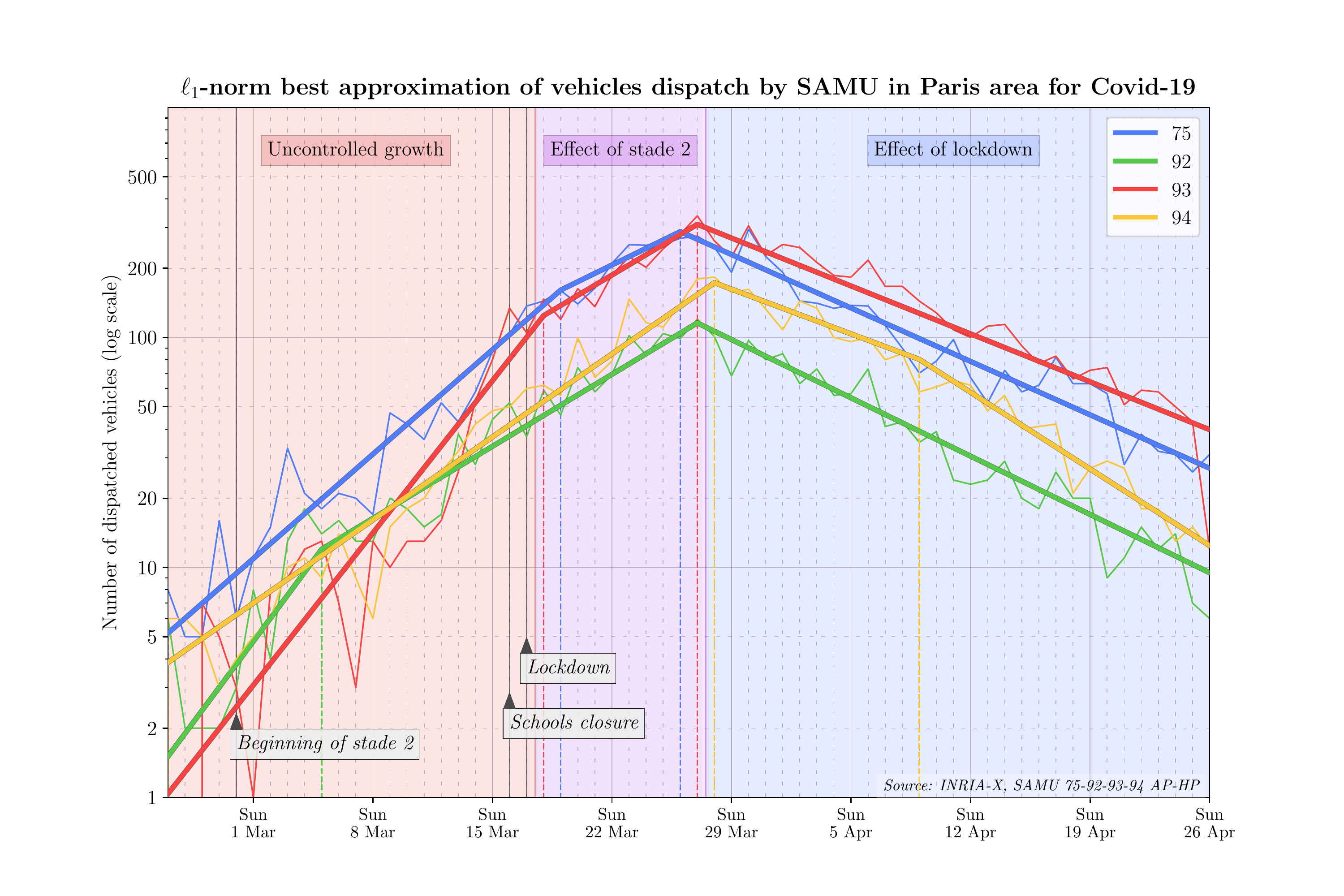}

  \end{center}
  \caption{Logarithm of the number of ambulances dispatched: the effect of the successive sanitary measures}
  \label{p-phases}

\end{figure}

 \subsection{Construction of statistical indicators of epidemic resurgence based on emergency calls}

\begin{figure}[htbp]
  \begin{center}
    \includegraphics[scale=0.56, trim={1.5cm, 1cm, 2cm, 0cm}, clip]{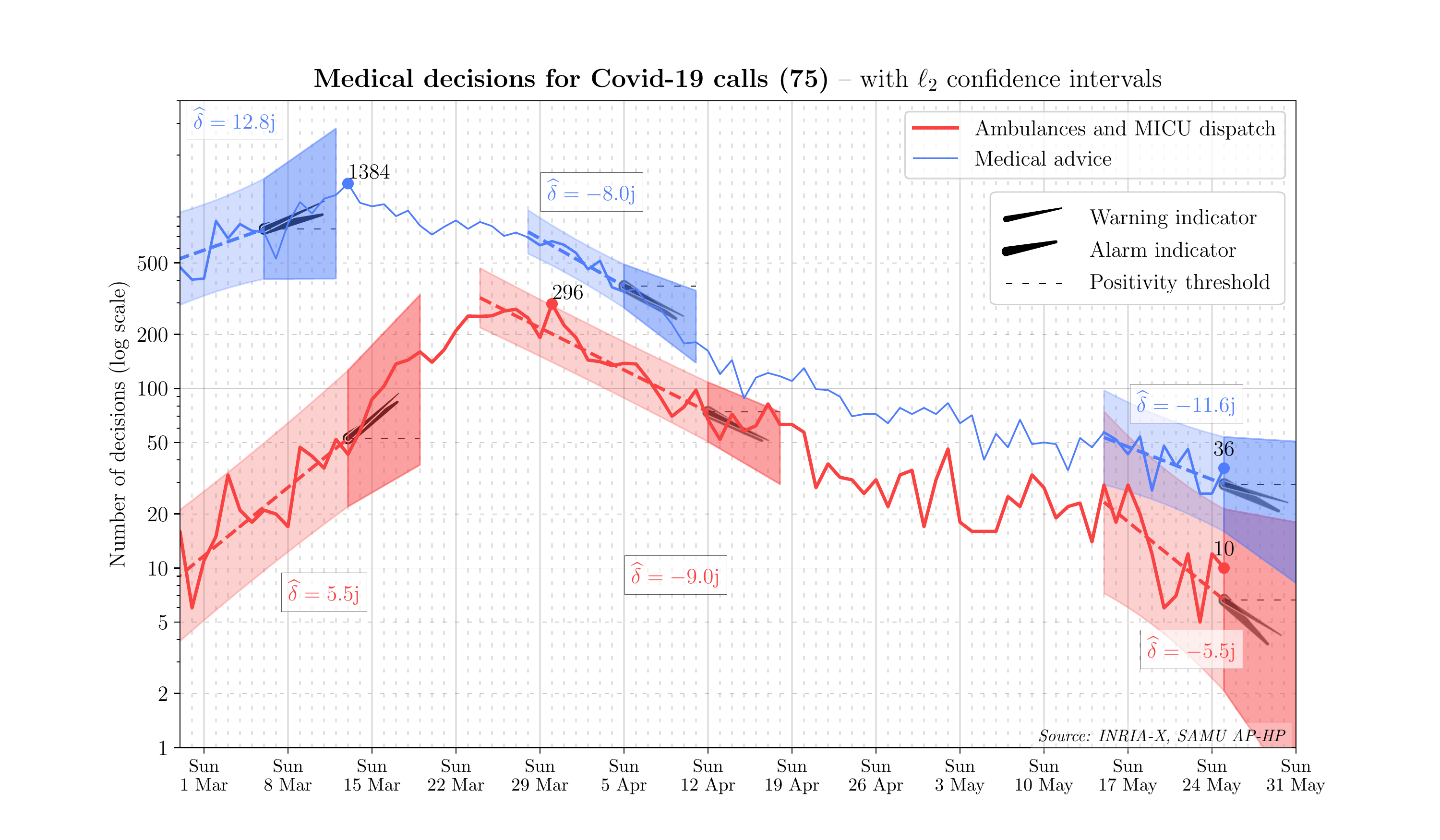}
  \end{center}
  \caption{Short term predictor for SAMU 75, with confidence regions and warning and alarms indicators}
    \label{fig-predictor}
\end{figure}

We implemented the alarm mechanism based on the inference
of doubling times described in~\Cref{ssec-alarm} and further explained in~\Cref{sec-appendixC}. The method
is illustrated on~\Cref{fig-predictor}. Given a time period where the data is known,
we perform a linear regression on the number
of medical advices and vehicles dispatched.

The light shaded, tubular areas around the curves are based on confidence intervals for the fluctuations of the observed log-counts. The dark-shaded, trapezoidal areas prolongate the tubular areas with straight lines, the slopes of which correspond to confidence intervals for the slope of the linear regression. We display such confidence domains for the last known data in May, performing a 6-day forecast based on the last ten days. In order to validate the method, we also display these domains for
older
data in March and April, performing for each a 6-day forecast based on a number of past days. For these time-periods, the short-term confidence domains are seen to satisfactorily contain the data of the
following days.
Observe how the shape of the confidence trapezoids depends on the number of points and the variance of the data used to compute them.

The needle-shaped (or clock hand) indicators depicted in each trapezoidal confidence domain illustrate the alarm mechanism of~\Cref{ssec-alarm}. For each slope inference, there is a $\vartheta_\warning=25\%$ probability that the real-value of the slope we estimated using recent data
is greater than the slope of the thin needle on the picture. Likewise there is a $\vartheta_\alarm=75\%$ probability that the real slope is
greater than the slope of the fat hand. As a result, a warning (resp.\ an alarm) on the dynamics of the medical advice curve should be triggered as soon as the thin needle (resp.\ the fat needle) has a positive angle with respect to the horizontal. We have depicted the horizontal with dashed lines to
enhance readability.

On May 25\nd, no warning nor alarm is triggered, since all the needle-indicators are below the positivity threshold, indicating with at least a 75\% confidence level that based on the last ten days, the two observed signals are on a decreasing trend. Note that due to the relative stagnation of the medical advice curve in the two first weeks of May, computing our indicators a few days earlier (such as May 22\nd) would have raised an warning to arouse vigilance due to the uncertainty on the future trend, but no alarm.

The numbers $\vartheta_\warning$ and $\vartheta_\alarm$ need
to be carefully calibrated, and for this, additional
data for the forecoming weeks may be helpful.

\section{Discussion}\label{sec-discuss}

\subsection{Calls resulting in medical advice are highly influenced by the instructions given to the population}\label{sec-announce}
The blue curve on~\Cref{fig-differenttypes,fig-predictor} %
counts the number of times medical advices was given, of all kinds
(calls resulting in recommendations given to the patient but no vehicle dispatched).
It is generally associated with early events in the unfolding of the pathology, and in particular occurrence of
the first symptoms.
An estimate of 5.1 days between the date of contamination and the date of the first symptoms is given in~\cite{laueretal}, so we may assume that calls for medical advice
are made by patients 5-8 days after contamination.
We observed in~\Cref{sec-noisy} that the peak in the number
of calls resulting in medical advice was on March 13\th. 
Hence, assuming the peak of contaminations was just
before lockdown, the peak of the
curve of new symptom occurrences should occur {\em only several days after
the lockdown time} (March 17\th). This indicates that the curve
of calls resulting in a medical advice did not give
a reliable picture of the epidemic growth around March 13\th.
Indeed, this curve is very sensitive to changes in the instructions
given to the population and to political announcements, notable
examples of which include the following:
recommendation to patients to call emergency
number 15, instead of going directly to emergency departments (to avoid
contamination and overcrowding); -- the presidential announcement on March 12\th~of more restrictive
measures to be deployed from March 16\th, making the population more aware of the
growth of the epidemic.

\subsection{The indicators of medical advice given and ambulances and MICU dispatch can be used to monitor the epidemic}

Setting aside perturbations due to political announcements or changes in the policy for calling SAMU,
the curve  of calls for medical advice should be a reliable and
early estimate of the curve of ambulances dispatched, %
which is triggered at a later stage in the unfolding of the pathology when symptom severity increases.
Thus, it gives an early signal allowing both SAMU  and hospitals to anticipate by several days an increase in load.

We can give a rough estimate
of this delay by considering the peak dates in~\Cref{fig-differenttypes}.
Epidemiological modeling indicates that the number of new contaminations
grows exponentially until a sanitary measure that is strong
enough to contain the epidemic is taken. In the present case, the candidates for
such strong measures are the school closing (March 16\th)
or the lockdown (March 17\th). Considering the last
day previous to the measures, we may assume that
the peak date for new contaminations was
on March 15\th~or March 16\th.
As mentioned above, according to~\cite{laueretal},
the time between contamination and first symptoms
is estimated to be 5.1 days. 
Hence, if calls for medical advice were representative
of first symptoms, the peak value for these calls would have
been between around March 20th or March 21\st.
The peak of the number of dispatches of ambulances and MICU was on March 27\th.
This leads to an estimate of 6-7 days for the
delay between the curve of the true need for medical advice
and the curve of vehicles dispatch.

The alarm mechanism was developed during the crisis, after March 20\th.
Had it been available before, 
in the 75, the early warning would have occurred
by February 24\textsuperscript{th} (for both \covid-19 indicators on medical advice and vehicle dispatch),  alarm would
have been been triggered on February 25\textsuperscript{th} based on medical advice, and a confirmation alarm would have occurred on February 27\textsuperscript{th} based on ambulance and MICU dispatch.

Tracking the same signal at a finer spatial resolution (a neighborhood rather than a department) may enable epidemic surveillance during the period following the lifting of measures such as travel bans. 
Indeed in view of the spatial differentiation in doubling times at the department level, it appears plausible that disparities may also be present at much finer spatial granularities, with resurgences localized to towns or neighborhoods. Deployment of alarm mechanisms constructed from event counts at finer spatial granularities could be used to identify clusters of resurgence early on, and guide subsequent action. 

Seasonal influenza, whose early symptoms can be mistaken for those of \covid-19, can also trigger a linear growth of the logarithms of the number of calls, or vehicle dispatched.
However the slope of this logarithmic curve is expected to be shallower,
owing to the lower contagiosity of influenza. One could thus distinguish, on the basis of the observed slope, whether one is confronted with seasonal flu or with an outbreak of \covid-19.

\subsection{Jumps of the curves of the number of calls may be caused by large clusters or influenced by neighboring countries}
The curves of the number of calls for medical advice, and
of vehicles dispatched (\Cref{fig-differenttypes}), both
jump between February 23\rd\ and 25\th.
Epidemiogical models are unlikely to produce
shocks of this type in the absence of exceptional
factors. Several significant epidemic events occurred at nearby dates,
including:
\begin{enumerate}
\item The development of the Covid-19 epidemic in the north of Italy, a
  region closely tied to France (the first lockdowns occurred around February 21\st\ in the province of Lodi). The school vacation ended on February
  23\th in the Paris area. A significant number of parisians went back
  from Italy during the week-end of February 22\nd-23\th (end of
  school vacation).
 \item A large Evangelist meeting (Semaine de Carême
    de l’Eglise La Porte Ouverte Chrétienne de Bourtzwiller, Haut-Rhin),
    from February 17\th\ to February 21\st,
    identified by Agence Régionale de Santé Grand Est as the
    source of a cluster phenomenon~\cite{arsgrandest}.
  \end{enumerate}
The potential influence of the north Italy epidemic was pointed out by
Paul-Georges Reuter (private communication).  At this stage, the
essential factors are not yet known.
The influence of mobility on the development
of the epidemic in the Paris area
will be studied in a further work.

\subsection{Patients from different areas tend to call the SAMU at different stages of the pathology}
Considering the piecewise linear curves in~\Cref{p-phases},
we note that in the 93, the date of the
    break point is shifted of 3 days, by comparison
    with the 75, suggesting that by this time, the patients
    of 93 were calling Center 15 at a later stage of the evolution
    of the disease. This hypothesis is confirmed by an examination
    of the ratio of the number of MICU dispatched over the number of ambulances
    dispatched. For instance, on March 30\th, there
    were 9 MICU dispatches and 276 other dispatches
    in the 75, to be compared with 25 MICU dispatches
    and 262 other dispatches in the 93, i.e.,
    ratios of 3.3\% in the 75 and 9.5\% in the 93.

    In the same way, the first breakpoints
    of the curves give an indication
    of the times at which ``stade 2'' measures
    influence the epidemic growth.
    These dates range from March 15\th~(for 92)
    to March 22\nd~(for 75). It may be the case
    that these dates are dispersed because
    the change of slope is relatively small, meaning
    that the effect of stade 2 is mild. Indeed, the milder the slope change, the more the estimation of the corresponding date is sensitive to noise. 
    Another effect which may have perturbed the curves
    is the important mobility of the population
    in the 4 departments, between March 12\th~and March 16\th.

\subsection{The strong spatial heterogeneity of the evolution of the epidemic may be explained by local conditions}
\label{sec-discrepancies}\label{sec-plateau}

We observed in~\Cref{sec-spatial-varies} that in the initial phase of the epidemic (Feb. 28\th--March 15\th), the doubling time was significantly shorter in the 93 department, whereas in the contraction phase, the halving time
was significantly higher. 

One may speculate that the contraction rate
in the lockdown phase is influenced by intra-familial contaminations.
In this respect, according to a survey of INSEE,
the national institute of statistics, the average size of a household
is of 2.6 in the 93, versus 1.9 in the 75 (values in 2016~\cite{Insee}).

We also remark that just after the peak on March 27,
and up to April 6, the curve of the department
93 on \Cref{fig-all} has the shape
of a high, oscillating plateau, 
decaying more slowly than the curve of the department
75. This may be caused by changes in the nature of the dominant mode
of contaminations, intra-familial contamination becoming an essential
part of the kinetics during lockdown.

One may also speculate that the blowup rate
in the initial phase is higher when the population
is more dependent on public transport, or
working in jobs with more contamination risk.
These aspects will
be further studied elsewhere.

After Stade 2 was announced, during
the period from March 2\nd~to March 6\th, a number
of large companies  took specific measures (e.g., forbidding
avoidable small group meetings,
enforcing travel restrictions, restricting office access),
in addition to the general measures (not shaking hands, forbidding large meetings)
enforced by the authorities. This may have led to a decrease
of the number of contamination on the workplace, and
one explanation for the increase of the doubling time.

\section{Epidemiological model based on transport PDE}
\label{sec-epidemio}
\label{sec-pde}
\subsection{Taking delays into account: a transport PDE SEIR model}
\label{sec-epidemio-1}
We now introduce a multi-compartment transport PDE model,
representing the dynamics of Covid-19.
As explained in~\Cref{sec-model}, in contrast
to ODE models, that assume that the transition time from a compartment to the next one has an exponential distribution, PDE models capture
{\em transition delays} bounded away from zero,
an essential feature of Covid-19. An interest of this PDE
model also lies in its unifying character:
it includes as special cases, or as variations,
SEIR ODE models that have been considered~\cite{crepey,colizza}.

We shall keep the traditional decomposition of individuals in compartments,
``susceptible'' (S), ``exposed'' (E), ``infectious'' (I), and ``removed''
from the contamination chain (R), as explained in~\Cref{sec-model}, but the state
variables attached to the $E$ and $I$ compartments will
take the time elapsed in the compartment into account,
and thus, will be infinite dimensional.

For all $t\geq 0$, we denote by $n_E(x,t)$ the density
of the number of individuals that were contaminated $x$ time units
before time $t$, and that are not yet infectious at time $t$, i.e.,
the number of exposed individuals that began to be exposed
at time $t-x$. 
Then, the size of the exposed population at time $t$ is given
by
\begin{align}
E(t) &= \int_0^\infty  n_E(x,t)\,\mathrm{d}x \enspace .\label{pop-exposed}
\end{align}
Similarly, we denote by $n_I(x,t)$ the density of the number of individuals
that became infectious $x$ time units before time $t$,
and that are not yet removed from the contamination chain
at that time. Then, the size of the infectious population
at time $t$ is given by
\begin{align}
  I(t) &=  \int_0^\infty  n_I(x,t)\,\mathrm{d}x \enspace .
  \label{pop-infectious}
\end{align}
Finally, we denote by $S(t)$ the number of susceptible individuals at time $t$,
and by $R(t)$ the number of individuals that have been removed from the contamination chain before time $t$. 

The total population at time $t$ is given by
\[ N(t):= S(t) + E(t) + I(t) + R(t) \enspace .
\]
We consider the following system of PDE and ODE, with
integral terms in the boundary conditions:
\begin{subequations}
\begin{align}
&  \frac{\mathrm{d}S}{\mathrm{d}t} = -   \frac{S(t)}{N(t)}\int_0^\infty K_{I\to E}(x,t) {n_I(x,t)} \,\mathrm{d}x \enspace, \label{e-leave-S}\\
  &
    n_E(0,t) =  \frac{S(t)}{N(t)}\int_0^\infty K_{I\to E}(x,t) {n_I(x,t)}\,\mathrm{d}x \enspace,\qquad 
  \frac{\partial n_E}{\partial t} (x,t)+ \frac{\partial n_E}{\partial x}(x,t) + K_{E\to I}(x,t) n_E(x,t)  = 0 \enspace, \label{e-Lotka}\\
& n_I(0,t) = \int_0^\infty K_{E\to I}(x,t) n_E(x,t)\,\mathrm{d}x \enspace,\qquad 
\frac{\partial n_I}{\partial t}(x,t) + \frac{\partial n_I}{\partial x}(x,t) + K_{I\to R}(x,t) n_I(x,t)  = 0 \enspace,\\
& \frac{\mathrm{d}R}{\mathrm{d}t} = \int_0^\infty K_{I\to R}(x,t) n_I(x,t)\,\mathrm{d}x \enspace.\label{e-removed}
\end{align}\label{e-transport}
\end{subequations}
We assume that an initial condition at time $0$, $S(0)$, $n_E(\cdot,0)$, $n_I(\cdot,0)$ and $R(0)$ is given.

This is inspired by the so called ``age structured models''
considered in population dynamics.
Kermack and McKendrick developed the first model
of this kind to analyze the Plague
epidemy of Dec.~1905 -- July~1906 in Mumbai~\cite{kermack-mckendrick}. Von Forster~\cite{vonforster} studied a similar model.
Nowadays, these models are used as a general tool in population
dynamics, with applications to biology and ecology~\cite{webb,PerthameBook,MMP},

In these models, ``age'' refers to the age elapsed in a compartment --
each transition to a new compartment resets to zero
the ``age'' of an individual. In contrast, in the classical SEIR 
literature based on ODE, the standard
notion of age (time elapsed since birth) is taken into account,
via a contact matrix tabulating age-dependent infectiosity
rates~\cite{crepey}. These two notions of age should not be confused.
In the sequel, we shall use quotes, as in ``age'', to denote
the age in a compartment, and will omit quotes to
denote the ordinary age (since birth).

We suppose that $K_{I\to E}$, $K_{E\to I}$ and $K_{I\to R}$ are given
{\em nonnegative} functions. The value $K_{E\to I}(x,t)$ gives
the departure rate from the compartment $E$ to the compartment
$I$, for individuals of ``age'' $x$ in the compartment $E$,
at time $t$.
Similarly, $K_{I\to R}(x,t)$ gives
the departure rate from the compartment $I$ to the compartment
$R$. As in the classical SEIR model, the departure
term from the susceptible compartment, i.e., the right-hand-side
of~\eqref{e-leave-S}
is bilinear in the number $S(t)$ of susceptible
individuals and in the population of infectious
individuals $n_I(\cdot,t)$, and we normalize by the size
of the population $N(t)$.  The term $K_{I\to E}(x,t)$ can be
interpreted as an infection rate.

Differentiating $N(t)$ with respect to time, using the system above,
and assuming that for all $t\geq 0$,
$n_E(x,t)$ and $n_I(x,t)$ vanish when $x$ tends to infinity,
we verify that the total population $N(t)$ is independent of time.

When the functions $K_{I\to E}, K_{E\to I}$ and $K_{I\to R}$
are constant, taking into account~\eqref{pop-exposed} and~\eqref{pop-infectious}, we recover the classical SEIR model from the dynamics~\eqref{e-transport}:
\begin{subequations}
\begin{align}
  &  \dot{S} = -   \frac{S}{N} K_{I\to E} I
  \enspace, \label{e-leave-S-SEIR}\\
  &
\dot{E} =  \frac{S}{N} K_{I\to E} I - K_{E\to I} E \enspace, \label{e-Lotka-SEIR}\\
&
\dot{I}= K_{E\to I} E - K_{I\to R} I \enspace, \label{e-Idif}\\ 
& \dot{R} = K_{I\to R} I  \enspace.
\end{align}\label{e-seir}
\end{subequations}

In the sequel, we shall consider~\eqref{e-transport}
instead of~\eqref{e-seir}, and we shall assume that
the rates $K_{E\to I}(x,t)=K_{E\to I}(x)$ and $K_{I\to R}(x,t)=K_{I\to R}(x)$
are functions of $x$, independent of time.
The rate $K_{I\to E}$ will have the product form
\[
K_{I\to E}(x,t) =\control(t) \psi(x) \enspace.
\]
The function $\psi(\cdot)$ is fixed, it is nonnegative
and not {\em a.e.}~zero.
In this way, the infectiosity of an individual depends on his
``age'' in the infectious phase, whereas the term $\control(t)$ represents
the control of the epidemic by sanitary measures (social distancing,
wearing masks, closing schools, lockdown, etc.). We shall
assume that the infectiosity rate
$K_{I\to E}(x,t)$ is the only parameter which can be controlled,
hence, $\control(\cdot)$ is a decision variable.
A variant of the ODE model~\eqref{e-seir},
in which $K_{I\to E}$ depends on time,
but not on $x$, is considered in~\cite{CS-Chang2020}.
Other versions, including a modification of the SEIR model
leading to a time delay differential equation,
are discussed in~\cite{magaldelay}.

For epidemics in their early stages, i.e.,
when the number of individuals in the exposed, infectious,
or removed compartments is negligible with respect to the number
of susceptible individuals, the classical SEIR model is well-approximated
by a linear system (see e.g.~\cite{kermack-mckendrick,bacaer}) tracking only the populations in the (E) and (I) compartments.
As noted in~\Cref{sec-model}, the fraction  of the French population exposed prior to May 11 is estimated of~5.7\% (see~\cite{salje}),
which justifies reliance on this linear approximation in our context. The same approximation
applies to the present PDE model. This is translated
to the assumption $S(t)/N(t)\simeq 1$, and we are reduced to the following system:
\begin{subequations}
\begin{align}
  &
    n_E(0,t) =  \int_0^\infty \!\!\control(t) \psi(x) {n_I(x,t)}\,\mathrm{d}x \enspace,\qquad 
  \frac{\partial n_E}{\partial t} (x,t)+ \frac{\partial n_E}{\partial x}(x,t) + K_{E\to I}(x) n_E(x,t)  = 0 \,, \quad \text{for }x>0 \,,\label{e-Lotkastat}\\
& n_I(0,t) = \int_0^\infty \!\!K_{E\to I}(x) n_E(x,t)\,\mathrm{d}x \,,\qquad 
  \frac{\partial n_I}{\partial t}(x,t) + \frac{\partial n_I}{\partial x}(x,t) + K_{I\to R}(x) n_I(x,t)  = 0 \enspace, \quad \text{for } x>0 \enspace.\label{e-leave}
\end{align}\label{e-transportsimple}
\end{subequations}
This is a two-compartment generalization of the renewal equation,
studied in Chapter~3 of~\cite{PerthameBook}.

In the sequel, we shall assume
that there is a maximal ``age'' $x^*_E$ of an individual
in the exposed state. Similarly, we shall assume that there
is a maximal ``age'' $x^*_I$ of an individual in the infectious
state. These assumptions, which are consistent
with epidemiological observations~\cite{laueretal},
will be incorporated in our model by forcing
all remaining exposed individuals of ``age'' $x_E^*$
to become infectious, with ``age'' $0$.
Similarly, all the remaining infectious individuals
are removed when reaching ``age'' $x_I^*$.
So, the function $n_E$ is now only defined
on the interval $[0,x_E^*]$, and similarly,
$n_I$ is only defined on $[0,x_I^*]$.
This leads to the following system: 
\begin{subequations}
\begin{align}
  &\!\!\!
  n_E(0,t) =  \int_0^{x_I^*}\!\!\! \control(t) \psi(x) {n_I(x,t)}\,\mathrm{d}x \,,
  \quad 
  \frac{\partial n_E}{\partial t} (x,t)+ \frac{\partial n_E}{\partial x}(x,t) + K_{E\to I}(x) n_E(x,t)  = 0 \,,\quad \text{for }0<x<x_E^* \,,\label{e-Lotkastat-compact}\\
  &\!\!\! n_I(0,t) =  \int_0^{x_E^*} K_{E\to I}(x) n_E(x,t)\,\mathrm{d}x + n_E(x^*_E,t) \,,\label{e-bd2}\\
  & %
  \qquad  \qquad  \qquad  \qquad  \qquad  \qquad  \qquad  \quad
  \frac{\partial n_I}{\partial t}(x,t) + \frac{\partial n_I}{\partial x}(x,t) + K_{I\to R}(x) n_I(x,t)  = 0 \,, \quad \text{for }0<x<x_I^*\,,\label{e-leavec-compact}\\
     &\!\!\!\frac{\dd R}{\dd t}(t) = \int_0^{x_I^*} K_{I\to R}(x) n_I(x,t)\mathrm{d} x  + n_I(x_I^*)  \enspace.
  \label{e-seir-compact}
\end{align}\label{e-transportsimple-compact}
\end{subequations}
This system may be obtained as a specialization of~\eqref{e-transportsimple},
in which $K_{E\to I}(x) $ is replaced by
$K_{E\to I}(x) \unit_{[0,x_E^*]}(x)+ \delta_{x_E^*}(x)$,
where $\unit$ denotes the indicator function
of a set, and $\delta$ denotes Dirac's delta function.

Note that the above model is still relevant when $x^*_E=0$.
Then, the partial
differential equation in \eqref{e-Lotkastat-compact} disappears, and
we are left with a PDE model modelling an infinite dimensional
compartement of infectious
individuals, without an explicit ``exposed but not yet infectious''
compartment.
This is similar to the original model of~\cite{kermack-mckendrick}.
In contrast, the maximal time elapsed by an individual in the infectious state, $x^*_I$, must be positive. Otherwise, the integral term in~\eqref{e-Lotkastat-compact}, representing contaminations, vanishes.

We shall assume, in the sequel, that the following assumption
holds.
\begin{assumption}\label{as-nonden}
  The functions $K_{E\to I}(\cdot)$,
  defined on $[0,x_E^*]$, and $\psi(\cdot)$ and $K_{I\to R}(\cdot)$,
  defined on $[0,x_I^*]$, are nonnegative, measurable and bounded.
Moreover, the function $\psi$ does not vanish {\em a.e.}\ and the point $x_I^*>0$ is the maximum of the essential support of
the function $\psi$. 
\end{assumption}
Indeed, considering the boundary condition in~\eqref{e-Lotkastat-compact},
we see that a population of ``age'' $x> \max\operatorname{ess\, supp}\psi$
in the infected ($I$) compartment 
will not participate any more to the contamination chain. Hence,
the last part of Assumption~\ref{as-nonden} is needed to interpret $R$ has the number of {\em all} the removed individuals.

Systems of PDE of this nature have been studied
in particular by Michel, Mischler and Perthame,
see~\cite{MMP,PerthameBook}, and also,
with an abstract semigroup perspective,
in the work by Mischler and Scher~\cite{MischlerScher}.

Then, using the boundedness of the coefficients (Assumption~\ref{as-nonden}),
and arguing as in the proof
of Theorem~3.1 of~\cite{PerthameBook} -- which concerns
the case of a single compartment -- one can show
that the system~\eqref{e-transportsimple-compact}
admits a unique solution in the distribution
sense $n:=(n_E,n_I)$ with
$n_E \in \mathcal{C}(\R_{\geq 0}, L^1([0,x_E^*]))$
and $n_I\in \mathcal{C}(\R_{\geq 0}, L^1([0,x_I^*]))$.
Hence, we can associate to the PDE~\eqref{e-transportsimple} a
well defined family of time evolution linear operators $(T_{s,t})_{t\geq s\geq 0}$,
acting
on the space $L^1([0,x^*_E])\times L^1([0,x^*_I])$.
The operator $T_{s,t}$ maps an initial
condition at time $s\geq 0$, that is a couple of functions
$n(\cdot,s):=(n_E(\cdot,s),n_I(\cdot,s))$, %
to the couple of functions $n(\cdot,t):=(n_E(\cdot,t),n_I(\cdot,t))$
at $t\geq s$. These operators are order preserving,
meaning that, if $n^1(\cdot,s)$ and $n^2(\cdot,s)$ are two initial
conditions such that $n^1_E(x,s)\leq n^2_E(x,s)$
and $n^1_I(x,s)\leq n^2_I(x,s)$
for all $x\geq 0$,
then the inequalities $n^1_E(x,t)\leq n^2_E(x,t)$
and $n^1_I(x,t)\leq n^2_I(x,t)$ hold for
all $x\geq 0$ and for all $t\geq s$.

An alternative modeling, more in the spirit of~\cite{kermack-mckendrick},
would be to consider
a single compartment, describing
the evolution of the density $n(x,t)$
of individuals that were contaminated at time $t-x$ by
the system:
\begin{align}
  &
  n(0,t) =  \int_0^{\infty}\!\! \control(t) \psi(x) {n(x,t)}\,\mathrm{d}x \,,
  \quad 
  \frac{\partial n}{\partial t} (x,t)+ \frac{\partial n}{\partial x}(x,t) + K(x) n(x,t)  = 0 \,,\quad \text{for }0<x<x^* \,,\label{e-Lotkastat-compactnew}
\end{align}
where $x^*>x^*_E$ is fixed, and $\psi(x)=0$ for $x<x^*_E$.
Then, $E(t) = \int_0^{x_E^*} n(x,t) \mathrm{d}x $ yields the size
of the exposed compartment. However, we prefer
the model~\eqref{e-transportsimple-compact} as it allows us to represent
variable incubation times.

The system~\eqref{e-seir-compact} can be extended to
represent infectiosity rates that depends on
the ages (time elapsed since birth) of individuals,
with infectiosity rates given by a contact matrix,
as in~\cite{crepey}. It suffices to split
each compartment in sub-compartments, corresponding
to different age groups. This will be detailed
in a further work.

\subsection{A Perron-Frobenius Eigenproblem for Transport PDE}
\label{sec-PFT}

When the control $\control(t)$ is constant and positive,
the family of time evolution operators $(T_{s,t})_{t\geq s\geq 0}$ is
determined by the semigroup $(S_{t}=T_{0,t})_{t\geq 0}$, and 
the long term evolution of the dynamical system~\eqref{e-transport}
can be studied by means of the {\em Perron--Frobenius eigenproblem}
\begin{subequations}
\begin{align}
  &
  \bar n_E(0) =  \int_0^{x_I^*} \control \psi(x) {\bar n_I(x)}\,\mathrm{d}x \enspace,\quad 
 \frac{\mathrm{d} \bar n_E}{\mathrm{d} x}(x) + (\lambda + K_{E\to I}(x)) \bar n_E(x)  = 0 \, \quad\text{for }0<x<x_E^*\, ,\label{e-Lotkaeig}\\
& \bar n_I(0) =  \int_0^{x_E^*} K_{E\to I}(x) \bar n_E(x)\,\mathrm{d}x + n_E(x_E^*)\enspace,\quad 
 \frac{\mathrm d \bar n_I}{\mathrm d x}(x) +
 ( \lambda + K_{I\to R}(x) )\bar n_I(x)  = 0 \,\quad\text{for }0<x<x_I^* \,,
 \label{e-eig2}
\end{align}\label{e-transportsimpleeig}
\end{subequations}
where $\bar n:=( \bar n_I(\cdot),\bar n_E(\cdot))$ is a nonnegative eigenvector,
and $\lambda$ is the eigenvalue.

We make a general observation from Perron-Frobenius theory.
\begin{lemma}\label{prop-unique}
  Let $w=(w_E,w_I)$, with $w_E\in L^1([0,x_E^*])$
  and $w_I\in L^1([0,x_I^*])$, be such that
  \begin{align}
\alpha   \bar{n} \leq w \leq \beta \bar{n}\label{e-sandwitch}
\end{align}
for some $\alpha,\beta>0$. Then,
\begin{align}
\alpha \exp(\lambda t) \bar n \leq 
S_t w \leq \beta \exp(\lambda t) \bar n ,\qquad \text{for all } t\geq 0 \enspace.
\label{e-asymp}
\end{align}
\end{lemma}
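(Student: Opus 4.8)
The plan is to exploit the two defining properties of the semigroup $(S_t)_{t\geq 0}$ established earlier, namely that it is \emph{linear} and \emph{order preserving}, together with the fact that $\bar n$ is an eigenvector with eigenvalue $\lambda$, i.e.\ $S_t\bar n = \exp(\lambda t)\bar n$ for all $t\geq 0$. Indeed, $\bar n$ solves the stationary eigenproblem \eqref{e-transportsimpleeig}, and one checks directly that $(x,t)\mapsto \exp(\lambda t)\bar n(x)$ is the solution of the time-dependent system \eqref{e-transportsimple-compact} (with constant control $\control$) issued from the initial condition $\bar n$; by uniqueness of the solution this means precisely $S_t\bar n=\exp(\lambda t)\bar n$.

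Granting this, the argument is immediate. First I would apply the semigroup $S_t$ to each of the two inequalities in \eqref{e-sandwitch}. Since $\alpha\bar n\leq w$ pointwise (in both coordinates, for a.e.\ $x$) and $S_t$ is order preserving, we get $S_t(\alpha\bar n)\leq S_t w$; and since $S_t$ is linear, $S_t(\alpha\bar n)=\alpha S_t\bar n=\alpha\exp(\lambda t)\bar n$. This yields the lower bound $\alpha\exp(\lambda t)\bar n\leq S_t w$. Symmetrically, from $w\leq\beta\bar n$ we obtain $S_t w\leq S_t(\beta\bar n)=\beta\exp(\lambda t)\bar n$, which is the upper bound. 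Combining the two gives \eqref{e-asymp} for every $t\geq 0$.

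The only genuine point to be careful about is the claim $S_t\bar n=\exp(\lambda t)\bar n$, so I would spell that out: substituting the ansatz $n_E(x,t)=\exp(\lambda t)\bar n_E(x)$, $n_I(x,t)=\exp(\lambda t)\bar n_I(x)$ into the transport equations of \eqref{e-transportsimple-compact}, the time derivative produces a factor $\lambda$, and the resulting identities are exactly the ODEs and boundary conditions of \eqref{e-transportsimpleeig}; the common factor $\exp(\lambda t)>0$ may be cancelled. Hence the ansatz is a distributional solution with the right initial data, and uniqueness (invoked earlier via the argument adapted from Theorem~3.1 of~\cite{PerthameBook}) forces it to coincide with $S_t\bar n$. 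I do not anticipate a real obstacle here: the whole content of the lemma is the monotonicity and homogeneity of $S_t$, both already available, so this is essentially a one-line sandwiching argument once the eigenvector relation is recorded.
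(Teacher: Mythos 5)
Your argument is correct and is exactly the paper's proof: apply the order-preserving, linear operator $S_t$ to the two inequalities in \eqref{e-sandwitch} and use $S_t\bar n=\exp(\lambda t)\bar n$. The extra paragraph verifying the eigenvector relation by substituting the ansatz into \eqref{e-transportsimple-compact} and invoking uniqueness is a sound elaboration of a fact the paper simply asserts, but it does not change the route.
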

\begin{proof}
  This follows from the order preserving and linear
  character of the semigroup $S_t$, together
  with $S_t\bar{n} =\exp(\lambda t)\bar n$.
  \end{proof}
This shows that the eigenvalue $\lambda$ determines the growth rate of $n(t,x)$
as $t\to \infty$, under the assumption that the initial population $w:=n(0,\cdot)$
is comparable with the eigenvector $\bar{n}$, meaning that inequality~\eqref{e-sandwitch}
below holds for some $\alpha,\beta>0$.

Since the functions $K_{E\to I}$ and $K_{I\to R}$
are independent of time, the existence of
a positive eigenvector is an elementary result:

\begin{proposition}\label{prop-eig}
Suppose  $\mu>0$ and that Assumption~\ref{as-nonden} holds.
Then, the eigenproblem~\eqref{e-transportsimpleeig}
has a solution
$(\bar{n},\lambda)$, where
$\bar{n}=(\bar{n}_E,\bar{n}_I)$,
the functions $\bar{n}_E$ and $\bar{n}_I$
are continuous and positive, and $\lambda\in \R$.
Moreover, the eigenvalue $\lambda$
is unique, and the eigenvector $\bar{n}$ satisfying
the latter conditions is unique
up to a multiplicative constant.
\end{proposition}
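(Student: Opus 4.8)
The plan is to solve the eigenproblem~\eqref{e-transportsimpleeig} explicitly by integrating the two ordinary differential equations along the ``age'' variable, reducing the existence of an eigenpair to a scalar fixed-point equation in $\lambda$, and then to prove that this scalar equation has a unique root by a monotonicity argument.

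First I would fix $\lambda\in\R$ and integrate the linear ODE in~\eqref{e-Lotkaeig}: given the value $\bar n_E(0)$, one gets
\begin{align}
\bar n_E(x) = \bar n_E(0)\,\exp\!\Bigl(-\lambda x - \int_0^x K_{E\to I}(s)\,\dd s\Bigr)\enspace,\qquad 0\le x\le x_E^*\enspace,\label{e-nE-sol}
\end{align}
which is continuous and positive as soon as $\bar n_E(0)>0$, since $K_{E\to I}$ is bounded by Assumption~\ref{as-nonden}. Plugging this into the boundary condition for $\bar n_I(0)$ in~\eqref{e-eig2} expresses $\bar n_I(0)$ as an explicit positive multiple of $\bar n_E(0)$; call the multiplier $a(\lambda)>0$. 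Then integrating the ODE in~\eqref{e-eig2} gives
\begin{align}
\bar n_I(x) = \bar n_I(0)\,\exp\!\Bigl(-\lambda x - \int_0^x K_{I\to R}(s)\,\dd s\Bigr)\enspace,\qquad 0\le x\le x_I^*\enspace,\label{e-nI-sol}
\end{align}
again continuous and positive. At this point $\bar n_E$ and $\bar n_I$ are determined up to the single scalar $\bar n_E(0)$, and the remaining boundary condition --- the contamination integral $\bar n_E(0)=\int_0^{x_I^*}\control\,\psi(x)\bar n_I(x)\,\dd x$ --- becomes, after substituting~\eqref{e-nI-sol} and the relation $\bar n_I(0)=a(\lambda)\bar n_E(0)$, the scalar equation
\begin{align}
1 = \control\, a(\lambda)\int_0^{x_I^*}\psi(x)\,\exp\!\Bigl(-\lambda x-\int_0^x K_{I\to R}(s)\,\dd s\Bigr)\dd x \;=:\; F(\lambda)\enspace.\label{e-scalar}
\end{align}
Any $\lambda$ solving $F(\lambda)=1$ yields, via~\eqref{e-nE-sol}--\eqref{e-nI-sol}, a genuine positive continuous eigenvector (unique up to the scalar $\bar n_E(0)$), and conversely any positive eigenvector must have $\bar n_E(0)>0$ and hence satisfy this equation; so existence and uniqueness of $(\bar n,\lambda)$ reduce to existence and uniqueness of a root of $F$.

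The core of the argument is then to show $F(\lambda)=1$ has exactly one solution $\lambda\in\R$. The function $F$ is a composition of exponential and integral operations, so it is continuous; the key point is strict monotonicity: $a(\lambda)$ is a sum (over the integral in~\eqref{e-eig2} plus the boundary term $n_E(x_E^*)$) of positive terms each of the form $\exp(-\lambda x-\cdots)$ or a weighted integral thereof, hence strictly decreasing in $\lambda$ --- here I use that $\psi$ does not vanish a.e.\ and $\control>0$ to guarantee the integral in~\eqref{e-scalar} is strictly positive and strictly decreasing, and that $x_I^*>0$ is the maximum of the essential support of $\psi$ so that the decay rate in $\lambda$ is genuinely felt. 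Thus $F$ is strictly decreasing. For the limits, as $\lambda\to+\infty$ the factors $\exp(-\lambda x)$ force $F(\lambda)\to 0$ (dominated convergence on the bounded intervals), while as $\lambda\to-\infty$ the same factors blow up on a set of positive $\psi$-measure near $x_I^*$, giving $F(\lambda)\to+\infty$; by the intermediate value theorem there is exactly one $\lambda$ with $F(\lambda)=1$. The main obstacle I anticipate is being careful about the boundary terms $n_E(x_E^*)$ and $n_I(x_I^*)$ that appear in~\eqref{e-transportsimpleeig} (the Dirac-mass forcing of over-aged individuals): one must verify these are included correctly in $a(\lambda)$ and that they preserve both positivity and the strict monotonicity of $F$, and also check the degenerate case $x_E^*=0$ in which the $E$-ODE drops out and~\eqref{e-nE-sol} is replaced by $\bar n_E\equiv\bar n_E(0)$ on a trivial interval --- the scalar equation~\eqref{e-scalar} and the monotonicity argument still go through verbatim.
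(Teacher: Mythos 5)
Your proposal is correct and follows essentially the same route as the paper: integrate the two ODEs to get the explicit exponential profiles, reduce the boundary conditions to the scalar characteristic equation (your $F(\lambda)=1$ is the paper's $\control\,G^\lambda=1$), and conclude existence by the intermediate value theorem using the limits at $\pm\infty$ and uniqueness of $\lambda$ from strict monotonicity — the latter being one of the two uniqueness arguments the paper itself gives (the other invokes the order-preserving sandwich of Lemma~\ref{prop-unique}). Your attention to the boundary term $\bar n_E(x_E^*)$ in $a(\lambda)$ and to the degenerate case $x_E^*=0$ matches the paper's treatment.
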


The proof of this proposition exploits a classical
argument in renewal theory, see Lemma~3.1 p.~57 of~\cite{PerthameBook}.
We give the proof, leading to a semi-explicit representation
of the eigenvector, which we shall need in~\Cref{sec-tropical}.

\begin{proof}
We next provide a semi-explicit formula for the eigenvector.
We set
  \begin{equation}\label{defFlambda}
  F^\lambda_{E\to I}(x):= \int_0^x (\lambda + K_{E\to I}(z))\mathrm{d}z
  \,,\text{ for }0\leq x\leq x_E^*\, \quad
  F^\lambda_{I\to R}(x):= \int_0^x (\lambda + K_{I\to R}(z))\mathrm{d}z
  \,,\text{ for }0\leq x\leq x_I^*\, .
  \end{equation}
  Let $\bar n=(\bar{n}_E,\bar{n}_I)$ be an eigenvector associated to the eigenvalue $\lambda$, so that it satisfies~\eqref{e-transportsimpleeig},
and assume that $\bar n_E(\cdot)$ is continuous on $(0,x_E^*)$, and $\bar n_I(\cdot)$ is continuous on $[0,x_I^*]$.
Integrating the differential equations in~\eqref{e-transportsimpleeig},
and using the continuity of $\bar n$ and of the following expressions, 
  we see that $\bar n$ necessarily satisfies
  \begin{align}
    \bar n_E(x) &=  \exp\big( - F^\lambda_{E\to I}(x)\big) \bar n_E(0)\,,
    \text{ for } 0\leq  x \leq x_E^*\,
    \quad
    \bar n_I(y)=  \exp\big( - F^\lambda_{I\to R}(y)\big) \bar n_I(0)\,,
    \text{ for } 0 \leq  y\leq x_I^*\,.
    \label{e-explicit}
  \end{align}
  Conversely, if $\bar n$ satisfies the above expressions, then
it satisfies the differential equations in~\eqref{e-transportsimpleeig} and is
continuous.
  Using~\eqref{e-explicit}, together with the boundary conditions in~\eqref{e-transportsimpleeig}, and the assumption that $  \psi$ does not vanish {\em a.e.},
  and that $\mu>0$,
we deduce that if $\bar n_E(x)> 0$ for some $x\in [0,x_E^*]$ or
$\bar n_I(y)> 0$ for some $y\in [0,x_I^*]$, then
 $\bar n_E(x)> 0$ for all $x\in [0,x_E^*]$ and
$\bar n_I(y)> 0$ for all $y\in [0,x_I^*]$.
Then, if a continuous nonnegative eigenvector $\bar n$ exists, it is
everywhere positive.
Moreover,
\eqref{e-explicit} and the boundary condition in~\eqref{e-eig2} entail
that the eigenvector $\bar n$ is unique, up to a scalar multiple.

Using also the boundary condition in~\eqref{e-Lotkaeig}, %
and specializing~\eqref{e-explicit} to $x=x_E^*$ and $y=x_I^*$,
we deduce that $\control G^\lambda \bar{n}_E(0)= \bar{n}_E(0)$,
  where 
\[  G^\lambda = \Big(\int_0^{x_I^*}\psi(x)
  \exp\big( - F_{I\to R}^\lambda (x) \big)\dd x\Big)
 \Big(
    \int_0^{x_E^*} K_{E\to I}(y)
    \exp\big( - F^\lambda_{E\to I} (y)\big)\mathrm{d}y
    +
    \exp\big( - F^\lambda_{E\to I}(x_E^*)\big)
    \Big)\,.
    \]
    Therefore, for an eigenvector $\bar n$ to exist, we must solve
    the equation $\control G^\lambda=1$ (the so-called
    ``characteristic equation'' in renewal theory).
    Since the functions $K_{E\to I}$, $K_{I\to R}$ and $\psi$ are nonnegative and
    integrable,
  and $\psi$ is nonzero on a set of positive measure,
  we deduce that $\lim_{\lambda \to-\infty} G^{\lambda}=+\infty$.
  We also have $\lim_{\lambda \to +\infty}G^{\lambda}=0$. Moreover,
  the map $\lambda \mapsto G^\lambda$ is continuous.
  Since $\control>0$, by the intermediate value theorem,
  we can find $\lambda$ such that $\control G^\lambda =1$,
  and this $\lambda$ is the eigenvalue.

  We showed that any nonnegative continuous eigenvector is positive.

  Hence, any two nonnegative eigenvectors $\bar{n}^1$ and $\bar{n}^2$ with
  eigenvalues $\lambda_1$ and $\lambda_2$ satisfy
  $\alpha \bar{n}^1\leq \bar{n}^2\leq \beta \bar{n}^1$ for some $\alpha,\beta>0$,
  and it follows from~\Cref{prop-unique} that $\alpha \exp(\lambda_1 t)
  \bar{n}^1\leq \exp(\lambda_2 t) \bar{n}^2\leq \beta\exp(\lambda_2 t) \bar{n}^1$,
  which entails that $\lambda_1=\lambda_2$, showing that the eigenvalue
  associated with a nonegative eigenvector is unique. Alternatively,
  the uniqueness of this eigenvalue follows
  from the strictly decreasing character of the map
  $\lambda \mapsto G^\lambda$.
  \end{proof}

The asymptotic
bound~\eqref{e-asymp} can be reinforced,
by showing that, for all positive initial conditions $w$,
\begin{align}
  S_tw  = C_1(w) \bar{n} \exp(\lambda t)
  + O(\exp(\lambda_2 t)) \enspace , \qquad \text{as } t\to\infty \enspace,
  \label{e-specgap}
\end{align}
for some positive constant $C_1(w)$,
and $\lambda_2<\lambda$. This result,
with an explicit control of $\lambda_2$,
can be obtained as follows. We make
a diagonal scaling, using the positive eigenvector,
and we normalize the semigroup to make the Perron eigenvalue
$\lambda$ equal to zero. This leads to the semigroup
\[
(\tilde{S}_t w)(x) := \exp(-\lambda t)\bar{n}^{-1}(x) [S_t ( w\bar{n})](x)
\, .
\]
In potential theory, a version of
this scaling is known as {\em Doob's h-transform} (see e.g.~\cite{dynkin}). 
The semigroup $\tilde{S}_t$ obtained in this way
is associated with a Markov process, and, so, the spectral
gap of this semigroup can be bounded in terms of Doeblin's ergodicity
coefficient~\cite{gaubertquIEuli2013,bansaye}, leading
to~\eqref{e-specgap}.
These aspects will be detailed elsewhere.
Alternatively, the
relative entropy inequality technique of~\cite{MMP}
allows one to establish the convergence
of $n(\cdot,t)$ to the eigenvector, modulo multiplicative
constants, as $t$ tends to infinity. 

\subsection{Universality of the log-rate of epidemic observables}\label{subsec-observables}
Epidemic observables are obtained by applying a continuous linear form
to the state variable.
Supposing
that $n_I(\cdot,t)$ is a continuous function, an
epidemic observable will be of the form
\begin{align}
  Y_\kappa(t) =\varphi(n(\cdot,t)):=
\int_0^{x_I^*} n_I(x,t)\,\mathrm{d}\kappa(x)
\label{e-delayed}
\enspace ,\end{align}
where $\dd\kappa(x)$ is a nonnegative nonzero
Borel measure.  Epidemic events anterior
to the infectious phase, like contamination,
are by nature hard to detect, so the observable
depends only on $n_I$.

\begin{proposition}\label{prop-universal}
  Suppose that Assumption~\ref{as-nonden} holds,
  let $(\lambda,\bar{n})$ denote the solution
  of the Perron-Frobenius eigenproblem~\eqref{e-transportsimpleeig},
  and suppose that for some $T>0$, there exist positive
  constants $\alpha,\beta$ such that $\alpha \bar n\leq n(\cdot,T)\leq \beta \bar n$.
  Then, for all epidemic observables of the
 form~\eqref{e-delayed}, the map $t\mapsto \log Y_\kappa(t)- \lambda t$
 is bounded. A fortiori,
  \[
  \lim_{t\to\infty} \frac{1}{t} \log Y_\kappa(t) = \lambda \enspace .
  \]
\end{proposition}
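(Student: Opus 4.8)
The plan is to read off the statement from the sandwich estimate of \Cref{prop-unique} (the order-preserving semigroup bounds) combined with the strict positivity of the Perron eigenvector from \Cref{prop-eig}. The one genuine adjustment is that the hypothesis $\alpha\bar n\le n(\cdot,T)\le\beta\bar n$ is placed at time $T$, not at time $0$; but since $K_{E\to I}$, $K_{I\to R}$, $\psi$ are time-independent and $\control$ is constant, the linearized dynamics is autonomous, so $T_{T,t}=S_{t-T}$ and $n(\cdot,t)=S_{t-T}\,n(\cdot,T)$ for $t\ge T$. Applying \Cref{prop-unique} to the initial datum $w:=n(\cdot,T)$ (which by assumption is componentwise sandwiched between $\alpha\bar n$ and $\beta\bar n$ in $L^1([0,x_E^*])\times L^1([0,x_I^*])$) then gives
\[
\alpha\,e^{\lambda(t-T)}\,\bar n\;\le\; n(\cdot,t)\;\le\;\beta\,e^{\lambda(t-T)}\,\bar n,\qquad t\ge T .
\]

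Next I would apply the observable $\varphi$ of~\eqref{e-delayed} to this two-sided bound. The functional $\varphi$ depends only on the $n_I$-component, and since $\dd\kappa$ is a nonnegative measure it is order preserving: $m_I\le\tilde m_I$ implies $\int_0^{x_I^*}m_I\,\dd\kappa\le\int_0^{x_I^*}\tilde m_I\,\dd\kappa$. Hence $\alpha e^{\lambda(t-T)}\varphi(\bar n)\le Y_\kappa(t)\le\beta e^{\lambda(t-T)}\varphi(\bar n)$ for $t\ge T$. To conclude I need $\varphi(\bar n)\in(0,\infty)$: by \Cref{prop-eig}, $\bar n_I$ is continuous and strictly positive on the compact interval $[0,x_I^*]$, hence bounded between two positive constants there, while $\dd\kappa$ is a nonzero finite nonnegative Borel measure carried by $[0,x_I^*]$ (the observable being a continuous linear form on the continuous function $n_I(\cdot,t)$); thus $\varphi(\bar n)=\int_0^{x_I^*}\bar n_I\,\dd\kappa$ lies strictly between two positive multiples of $\kappa([0,x_I^*])>0$. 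Taking logarithms then gives, for $t\ge T$,
\[
\log\alpha-\lambda T+\log\varphi(\bar n)\;\le\;\log Y_\kappa(t)-\lambda t\;\le\;\log\beta-\lambda T+\log\varphi(\bar n),
\]
so $t\mapsto\log Y_\kappa(t)-\lambda t$ is bounded on $[T,\infty)$ (and on the remaining compact interval $[0,T]$ as well, by continuity and positivity of $Y_\kappa$, if a global bound is wanted); dividing by $t$ and letting $t\to\infty$ gives the ``a fortiori'' limit $\tfrac{1}{t}\log Y_\kappa(t)\to\lambda$.

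There is no substantial obstacle here — the statement is really a corollary of \Cref{prop-unique}. The two places that require a moment's attention are the time-translation used to start the sandwich estimate at time $T$ (which is exactly where the autonomy of the linearized PDE and the constancy of $\control$ intervene), and the verification that $\varphi$ evaluated at the eigenvector is finite and nonzero, which uses the strict positivity of $\bar n_I$ from \Cref{prop-eig} together with compactness of the ``age'' interval $[0,x_I^*]$ and finiteness of the Borel measure $\dd\kappa$ underlying the observable.
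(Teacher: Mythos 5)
Your proposal is correct and follows essentially the same route as the paper: apply the sandwich estimate of \Cref{prop-unique} to $w=n(\cdot,T)$, push the two-sided bound through the nonnegative linear form $\varphi$, and take logarithms. The extra care you take with the time translation (autonomy of the linearized semigroup) and with checking that $\varphi(\bar n)$ is finite and positive via the continuity and strict positivity of $\bar n_I$ from \Cref{prop-eig} only makes explicit what the paper leaves implicit.
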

\begin{proof}
  Taking $w=n(\cdot,T)$ in~\Cref{prop-unique},
  we deduce that
  \[
\alpha \exp(\lambda t) \bar{n}\leq 
S_{T+t} n(\cdot,0)=n(\cdot,T+t)
\leq \beta \exp(\lambda t) \bar{n}\,, \text{ for }t>0 \enspace .
\]
It follows that $\log \alpha + \lambda t + \log \varphi(\bar{n})
\leq \log Y_\kappa(T+t)\leq \log \beta + \lambda t + \log \varphi(\bar{n})$.
\end{proof}

A simple example of observable, discussed in~\Cref{sec-model},
consists of {\em pure delays}. For instance,
we assumed that the number of dispatches
of MICU is given by $Y_{\micu}(t) = \pi_{\micu} C(t-\tau_{\micu})$
where $C(t)$ is the number of contaminations
at time $t$, $\pi_{\micu}$ the proportion of contaminated
patients who will need a MICU transport, and
$\tau_{\micu}$ a fixed delay. This can be obtained
as a special case of~\eqref{e-delayed},
taking $K_{E\to I}\equiv 0$, so that
the transition tom $E$ to $I$ occurs
always at time $x_E^*$, and $\dd \kappa:=\pi_{\micu}\delta_{\tau_{\micu}-x_E^*}$,
where $\delta$ is the Dirac $\delta$ function.

Other events can be considered: medical advice,
EMT dispatch, admission to ICU, or decease. These
events corresponds to different values of the proportion $\pi$ and
of the delay $\tau$.
By \Cref{prop-universal}, the rate $\lim_{t\to\infty} t^{-1}\log Y(t)$
will be the same for all the corresponding observables, although the convergence
of the function $t^{-1}\log Y(t)$ to its limit
will be observed in a delayed manner,
for observables corresponding to the latest stages of the
pathology.

\subsection{Discrete versions of the epidemiological model}
The reader interested in ODE model of epidemics
might wish to note that
the previous analysis applies
to such finite dimensional models.
Instead of the transport PDE~\eqref{e-transportsimple}, 
we may consider an ODE of the form
\begin{align}
\dot{v} = Mv \label{e-metzler}
\end{align}
where $v(t)\in \R^n$ and $M$ is a $n\times n$ matrix with non-negative off-diagonal terms, a so-called {\em Metzler matrix}. In the original SEIR model~\cite{bacaer}, the matrix $M$, obtained by considering
the $(E,I)$-block equations~\eqref{e-Lotka-SEIR}, \eqref{e-Idif},
with $S/N\simeq 1$, is of dimension $2$. In the generalizations
of the SEIR model considered in~\cite{crepey,colizza}, the dimension
$n$ is increased to account for other compartments. 
One can also discretize the PDE system~\eqref{e-transportsimple} using
a monotone (upwind) finite difference scheme,
and this leads to a system of the form~\eqref{e-metzler}.

In all these finite dimensional models,
the matrix $M$ is Metzler and irreducible. 
Then, the Perron--Frobenius theorem for linear, order-preserving semigroups (see~\cite{berman}) implies that $M$ admits a unique eigenvalue $\lambda$ of maximal real part. Furthermore $\lambda$ is algebraically
simple and real, and its associated eigenvector $u$ has strictly positive coordinates.
Then, it follows from the spectral theorem that
\[ v(t)
= \exp(\lambda t)u  + o(\exp(\lambda_2 t)) 
\]
as $t\to \infty$, where $\lambda_2$ is the maximal real
part of an eigenvalue of $M$ distinct from $\lambda$.
Again, in this discrete model, an epidemic observable $Y(t)$ is obtained
by applying a nonnegative linear form to the vector $v(t)$, i.e,
$Y(t)=\varphi^\top v(t)$, for some nonnegative column vector
$\varphi$.

\section{Tropicalization of the logarithm of nonnegative observables of switched Perron--Frobenius dynamics}\label{sec-tropical}
\subsection{Hilbert's geometry applied to piecewise linear approximation}
We introduce an abstract setting, which captures
epidemiological models in which most individuals
are susceptible.
This setting applies, in particular, to the transport PDE
model of~\eqref{e-transportsimple}, when the transition functions are supported
by compact intervals, and to the general finite dimensional
Metzler model~\eqref{e-metzler}.

We consider $(V,\leq)$, a partially ordered
Banach space,
with topological dual $V'$. We denote by $V_{\geq 0}:= \{v\in V\mid v\geq 0\}$
the set of nonnegative elements of $V$, which is a convex
cone.
This cone must be pointed (i.e., $V_{\geq 0} \cap (-V_{\geq 0})=\{0\}$),
since the relation $\leq$ is a partial order.
We require this cone to be closed.

We consider a sequence of $m$ semigroups
$S^i=(S^i_t)_{t\geq 0}$, for $i\in [m]$, where $[m]:=\{1,\dots,m\}$.
We assume that
for all $i\in [m]$, and for all $t\geq 0$,
$S^i_t$ is a bounded linear operator from $V$
to itself, and that the semigroup property holds,
i.e., $S^i_{t+s}= S^i_t\circ S^i_s$. We shall say that the
semigroup $S^i$ is {\em order preserving} if, for all $v\in V_{\geq 0}$,
and for all $t\geq 0$, $S^i_t v\in V_{\geq 0}$.

We shall consider commutation instants, $t_0:=0<t_1<\dots< t_{m-1}$,
These instants will correspond to significant epidemiological dates,
for instance, dates at which sanitary measures are taken. We
set $t_m:=+\infty$.

We select an initial condition $v_0 \in V_{\geq 0}$, and consider
the abstract dynamical system obtained by switching between the
evolutions determined by the semigroups
$S^1,\dots,S^m$, at the successive times $t_1,\dots, t_{m-1}$.
The state of this dynamical system, at time $t\in [t_j, t_{j+1})$,
is given by
\begin{equation}\label{evolPerron}
v_t := S^{j+1}_{t-t_j} \circ S^{j}_{t_j -t_{j-1}}\circ \dots \circ S^1_{t_1-t_0}(v_0)
\enspace .
\end{equation}

Recall that
a {\em part} of the closed convex cone $V_{\geq 0}$ is an equivalence class
for the relation $\sim$ such that, for $v,w\in V_{\geq 0}$,
we have $v\sim w$ if and only if there exists two positive
constants $\alpha$ and $\beta$ such that
$\alpha v \leq w \leq \beta v$. A part is {\em trivial}
if it is reduced to the equivalence class of the zero vector.
{\em Hilbert's projective metric} $d_H$ 
 is defined on every non-trivial part of $V_{\geq 0}$
by the following formula
\[
d_H(v,w) = \log \inf\bigg\{ \frac{\beta}{\alpha} : \alpha,\beta>0, \; \alpha v \leq w \leq \beta v\bigg\}
\enspace .
\]
The infimum is achieved, since $V_{\geq 0}$ is closed.
The map $d_H$ is nonnegative, it satisfies the triangular
inequality, and $d_H(v,w)$ vanishes if, and only if, $v$ and $w$
are proportional -- this justifies the term ``projective metric''.
This metric plays a fundamental role in Perron--Frobenius
theory and in metric geometry,
and also in tropical geometry,
see~\cite{nussbaumlemmens,papadopoulos,cgq02}
for background.

When $V_{\geq 0}=
(\R_{\geq 0})^n$ is the standard orthant, and when
all the entries of the vectors $v$ and $w$ are positive,
we have
\[
d_H(v,w) = \max_{k\in[n]} (\log v_k-\log w_k )-\min_{k\in[n]}(\log v_k-\log w_k )
\enspace. 
\]
Denoting by $e$ the unit vector of $\R^n$, we observe that
\[
d_H(v,w)
= \|\log v -\log w\|_H
\]
where the notation $\log v$ is understood entrywise,
and
\[ \|z\|_H
= 2 \min_{c\in \R} \|x-c e\|_\infty \enspace.
\]
In other words, up to a logarithmic change of variables,
$d_H$ arises by modding out the normed space
$(\R^n,\|\cdot\|_\infty)$ by the one-dimensional
space $\R e$.

We shall suppose that every semigroup $S^i$ has an eigenvector
$u^i\geq 0$, with eigenvalue $\lambda^i$,
meaning that
\[
S^i_t u^i = \exp(\lambda^i t) u^i,\quad \forall t\geq 0 \enspace .
\]
Since $S^i_t $ preserves $V_{\geq 0}$,
this entails that $\lambda^i$ is real.

We choose a linear form $\varphi\in V'$ which
we require to take nonnegative values on $V_{\geq 0}$.
We shall think of $V$ as the
{\em state space} and $\varphi$ as an {\em observable}.
We consider the following scalar observation of the dynamics
\[ Y_t: = \varphi (v_t) \enspace .
\]
We shall assume, in addition, that
$\varphi$ does not vanish on $v_t$, for all $t\geq 0$.
Then, we can define the image
of the observation by the logarithmic map
\[
y_t:= \log Y_t,\qquad \forall t\geq 0 \enspace .
\]

The following result 
shows that the logarithm of the observation
stays at finite distance from a
piecewise linear map.
\begin{theorem}\label{th-1}
  Suppose that the semigroups $S^1,\dots,S^m$ are order
  preserving. Suppose in addition that
  the initial condition $v_0$ and the eigenvectors
  $u^1,\dots, u^m$ all lie in the same non-trivial part of $V_{\geq 0}$,
  and that the linear form $\varphi$ takes positive
  values on this part.
  Then, there exists a constant $\gamma$ such that
  the piecewise linear map $t\mapsto y^\trop_t$ defined,
for $t\in [t_j, t_{j+1})$, by 
\[
y^\trop_t := \lambda_{j+1}(t-t_j)+ \lambda_{j} (t_j-t_{j-1})
+ \dots + \lambda_1 (t_1-t_0)  + \gamma 
\enspace ,
\]
satisfies 
  \[
  |y_t - y^{\trop}_t |\leq \frac{\Delta}{2}, \qquad \forall t\geq 0 \enspace,
  \]
  where
  \[
  \Delta = d_H(v_0,u^1)+ d_H(u^1,u^2)+ \dots +d_H(u^{m-1},u^m)
  \enspace .
  \]
\end{theorem}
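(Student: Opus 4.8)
The plan is to track how the state $v_t$ moves through the cone under successive applications of the order-preserving semigroups, measuring everything in Hilbert's projective metric $d_H$, and then to translate the resulting multiplicative sandwich bounds into additive bounds on $y_t=\log Y_t$ via the observable $\varphi$. The key algebraic fact I would use repeatedly is that each semigroup $S^i_t$ is linear and order preserving, hence nonexpansive for $d_H$; moreover, $d_H(S^i_t v, u^i)=d_H(S^i_t v, S^i_t u^i)\le d_H(v,u^i)$ since $u^i$ is an eigenvector of $S^i_t$ and $d_H$ is invariant under scaling. So applying $S^i_t$ never increases the projective distance to the corresponding eigenvector $u^i$.

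The main steps, in order, would be: (1) Set $w_0:=v_0$ and inductively $w_j := S^j_{t_j-t_{j-1}}\circ\cdots\circ S^1_{t_1-t_0}(v_0)$, so that $v_t = S^{j+1}_{t-t_j}(w_j)$ for $t\in[t_j,t_{j+1})$. (2) Show by induction that $d_H(w_j, u^{j})\le d_H(v_0,u^1)+d_H(u^1,u^2)+\cdots+d_H(u^{j-1},u^j)$. The inductive step uses the triangle inequality $d_H(w_j,u^j)\le d_H(w_j,u^{j-1})+d_H(u^{j-1},u^j)$ together with $d_H(w_j,u^{j-1}) = d_H(S^{j}_{t_j-t_{j-1}}(w_{j-1}), u^{j-1})\le d_H(w_{j-1},u^{j-1})$, where the last inequality is the nonexpansiveness-towards-the-eigenvector observation above. (3) For $t\in[t_j,t_{j+1})$, bound $d_H(v_t, u^{j+1}) = d_H(S^{j+1}_{t-t_j}(w_j), u^{j+1})\le d_H(w_j,u^{j+1})\le d_H(w_j,u^j)+d_H(u^j,u^{j+1})\le \Delta$. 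This places $v_t$ within Hilbert-distance $\Delta$ of $u^{j+1}$ for all $t$, uniformly.

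From here the argument is bookkeeping with scalars. The bound $d_H(v_t,u^{j+1})\le\Delta$ means there are $\alpha_t,\beta_t>0$ with $\alpha_t u^{j+1}\le v_t\le\beta_t u^{j+1}$ and $\log(\beta_t/\alpha_t)\le\Delta$; normalizing, one may take $\alpha_t = e^{-c_t}$ and $\beta_t=e^{c_t-?}$... more cleanly, I would extract the ``barycentric'' representative: since the infimum in $d_H$ is attained, choose the scaling so that $e^{-\Delta/2}\,\rho_t\,u^{j+1}\le v_t\le e^{\Delta/2}\,\rho_t\,u^{j+1}$ for a suitable positive scalar $\rho_t$. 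Applying the positive linear form $\varphi$ (which is positive on the part containing all $u^i$ and all $v_t$) gives $e^{-\Delta/2}\rho_t\,\varphi(u^{j+1})\le Y_t\le e^{\Delta/2}\rho_t\,\varphi(u^{j+1})$, so $|\log Y_t - \log(\rho_t\varphi(u^{j+1}))|\le\Delta/2$. It remains to identify $\log(\rho_t\varphi(u^{j+1}))$ with the affine expression $y^\trop_t$ up to the single constant $\gamma$: because within the interval $[t_j,t_{j+1})$ the flow is exactly $S^{j+1}_{t-t_j}$, and because on the line $\R u^{j+1}$ this acts by multiplication by $e^{\lambda_{j+1}(t-t_j)}$, one checks that $\rho_t$ grows like $e^{\lambda_{j+1}(t-t_j)}$ times a factor depending only on $j$ (a product of $e^{\lambda_i(t_i-t_{i-1})}$ coming from the earlier phases), which is precisely the telescoping sum in the definition of $y^\trop_t$; the leftover multiplicative constants $\varphi(u^{j+1})$ and the normalization constants can be absorbed once one notes the eigenvectors are only defined up to scalars and re-selects their normalizations consistently, leaving a single global additive constant $\gamma$.

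The step I expect to be the main obstacle is the last one: showing that the phase-dependent prefactors collapse into the clean telescoping form with a \emph{single} constant $\gamma$ independent of $j$. The subtlety is that $\varphi(u^{j+1})$ depends on $j$, so naively one gets a different additive constant on each linear piece; the resolution is to fix the normalization of each eigenvector $u^i$ so that, say, the ``projection'' of $w_{i-1}$ onto $\R u^i$ matches up with the endpoint value from the previous phase — i.e. to propagate the normalization through the commutation times rather than choosing each $u^i$ independently — after which the constant genuinely becomes phase-independent. Once the normalizations are threaded correctly, verifying $|y_t-y^\trop_t|\le\Delta/2$ for \emph{all} $t\ge0$ (including a separate easy check at the initial phase $j=0$, where $d_H(v_t,u^1)\le d_H(v_0,u^1)\le\Delta$) is immediate. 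I would also remark that the hypotheses of \Cref{th-1} — that $v_0$ and all the $u^i$ lie in one common non-trivial part and that $\varphi$ is positive there — are exactly what guarantee all the $d_H$'s above are finite and all the $\varphi$-values are strictly positive, so that taking logarithms is legitimate throughout.
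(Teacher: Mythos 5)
Your steps (1)--(3) are sound and are essentially the paper's own induction rewritten in Hilbert-metric language: propagating a sandwich $\alpha\, u^{j+1}\le w_j\le \beta\, u^{j+1}$ through each phase, using linearity, order preservation and $S^{j+1}_s u^{j+1}=e^{\lambda_{j+1}s}u^{j+1}$. The genuine gap is exactly where you predicted it, and your proposed fix does not close it. You suggest absorbing the phase-dependent prefactors by re-normalizing the eigenvectors $u^i$ along the trajectory. But the quantity you need to control, namely the center $\log\bigl(\rho_t\,\varphi(u^{j+1})\bigr)$ of the sandwich of $v_t$ around $u^{j+1}$, is \emph{invariant} under rescaling $u^{j+1}\mapsto c\,u^{j+1}$: the scalar $\rho_t$ transforms by $1/c$ while $\varphi(u^{j+1})$ transforms by $c$. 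So no choice of normalizations can make the per-phase additive constants coincide; they genuinely differ from phase to phase, by amounts controlled by the Hilbert distances $d_H(u^{j+1},u^{j+2}),\dots$. Relatedly, if $\rho_t$ is taken from the \emph{optimal} sandwich at each $t$ (as your appeal to $d_H(v_t,u^{j+1})\le\Delta$ suggests), it need not evolve exactly as $e^{\lambda_{j+1}(t-t_j)}$ within a phase, since order-preserving maps can strictly contract $d_H$; you must instead propagate one fixed sandwich from $t_j$.

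The paper's resolution is different and worth internalizing: it never centers the band on the current eigenvector $u^{j+1}$. After establishing $\exp(z_t)\,\alpha_j\cdots\alpha_0\,u^{j+1}\le v_t\le \exp(z_t)\,\beta_j\cdots\beta_0\,u^{j+1}$ by your induction, it further replaces $u^{j+1}$ by the \emph{final} eigenvector $u^m$ via the remaining chain $\alpha_{m-1}\cdots\alpha_{j+1}u^m\le u^{j+1}\le\beta_{m-1}\cdots\beta_{j+1}u^m$. This deliberately widens every phase's band to the full half-width $\tfrac12\sum_{i=0}^{m-1}\log(\beta_i/\alpha_i)=\Delta/2$, but in exchange the center becomes $z_t+\log\varphi(u^m)+\tfrac12\sum_{i=0}^{m-1}\log(\alpha_i\beta_i)$, whose additive constant $\gamma=\log\varphi(u^m)+\tfrac12\sum_{i=0}^{m-1}\log(\alpha_i\beta_i)$ is manifestly independent of $j$. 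In short: the single $\gamma$ is bought not by re-normalization but by paying the extra Hilbert distances into the width of the tube, which is precisely why $\Delta$ is the \emph{full} sum $d_H(v_0,u^1)+\cdots+d_H(u^{m-1},u^m)$ rather than a partial one. With this modification your argument becomes the paper's proof.
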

\begin{proof}
  By definition of Hilbert's projective metric,
  we can find positive constants $\alpha_0,\beta_0$,
  such that $\alpha_0 u^1\leq v_0 \leq \beta_0 u^1$
  and $d_H(v_0,u^1)=\log(\beta_0/\alpha_0)$.
  Similarly, for all $i\in[m-1]$, we can find
  positive constants $\alpha_i,\beta_i$,
  such that $\alpha_{i} u^{i+1}\leq u^{i} \leq \beta_{i} u^{i+1}$,
  and $d_H(u^{i},u^{i+1})=\log(\beta_i/\alpha_i)$.
  For all $j\geq 0$, with $j\leq m-1$,
 and for all $t\in [t_j,t_{j+1})$, we set
  \[
  z_t:=  \lambda_{j+1}(t-t_j) + \lambda_{j}(t_j-t_{j-1})
  +\dots + \lambda_1(t_1-t_0) \enspace. 
  \]
Since the semigroups $S^i$ are linear and order preserving, 
we prove by induction 
  \begin{align*}
    \exp(z_t )\alpha_j\dots\alpha_0 u^{j+1} 
    \leq v_t \leq \exp(z_t)\beta_j\dots\beta_0 u^{j+1} \enspace .
  \end{align*}
  We observe that
  \[
\alpha_{m-1}\dots \alpha_{j+1} u^m
  \leq 
  u^{j+1} \leq \beta_{m-1}\dots \beta_{j+1} u^m
  \]
and so
    \begin{align*}
    \exp(z_t )\alpha_{m-1}\dots\alpha_0 u^{m} 
    \leq v_t \leq \exp(z_t)\beta_{m-1}\dots\beta_0 u^{m} \enspace .
    \end{align*}
  Applying the linear form $\varphi$ to latter inequalities,
  taking the image by the log map,
  and setting
  \[
  \gamma:= \log \varphi(u^m)  + \frac{1}{2}\sum_{j=0}^{m-1}\log(\beta_j \alpha_j)  \enspace, 
  \]
  we arrive at the bound of the theorem.
\end{proof}
A general principle from tropical geometry
states that using ``logarithmic glasses''
reveals a piecewise linear structure~\cite{viro,itenberg}.
\Cref{th-1} is inspired by this principle.
This motivates the notation $y^\trop$,
for the ``tropicalization'' of the logarithm
of the observable $Y$.

\begin{remark}\label{rk-exact}
  If the space $V$ is of dimension $1$, then the bound $\Delta$
  appearing in~\Cref{th-1} is zero, implying that the approximation
  of the logarithm of observables by a piecewise linear curve
  is exact. This occurs if one considers a SIR ODE model:
  then, in the early stage of the epidemics, the dynamics can be
  written only in terms of the population of the one-dimensional I compartment.
\end{remark}
\begin{remark}
\Cref{th-1} carries over to discrete time systems
in a straightforward manner.
\end{remark}
\subsection{Application to the transport PDE model}
\Cref{th-1} applies in particular to the transport model~\eqref{e-transportsimple-compact}. 
Then, as noted above,
the evolution operator of the system~\eqref{e-transportsimple}
preserves the space $V=L^1([0,x^*_E])\times L^1([0,x^*_I])$.
Moreover, when the epidemiological control
term $\control(t)$ is constant, \Cref{prop-eig} shows that
the eigenproblem~\eqref{e-transportsimpleeig} has a
positive and continuous solution $\bar{n}$, with
a real eigenvalue $\lambda$. Different 
stages of sanitary policies correspond to successive
values $\control^1,\ldots,\control^m$
of $\control(t)$, leading to different semigroups $S^i$, $i\in [m]$.
Then, the solution $v_t:= n(\cdot,t)$ of~\eqref{e-transportsimple}
is determined as in~\eqref{evolPerron}.
Each semigroup $S^i$ yields a continuous and positive eigenvector
$u^i:= \bar{n}^i$
satisfying~\eqref{e-transportsimpleeig} associated
with a real eigenvalue $\lambda^i$ of $S^i$.
Two continuous and positive
functions defined on a compact interval are always
in the same part of the cone of nonnegative
functions of $V$, so \Cref{th-1} applies
to this model.

We next give an explicit estimate for the Hilbert projective
distances between eigenvectors, arising in \Cref{th-1}.
\begin{proposition}\label{prop-bound}
  Suppose that Assumption~\ref{as-nonden} holds,
  and that
  for $i=1,2$, 
  $(\lambda^i,\bar{n}^i)$ is the solution  $(\lambda,\bar{n})$
  of the Perron-Frobenius eigenproblem~\eqref{e-transportsimpleeig} when  $\control=\control^i$.
Then,  we have 
 \[ d_H(\bar{n}^1, \bar{n}^{2})\leq |\lambda_1-\lambda_{2}| (x^*_E+x^*_I)
 \enspace .\]
\end{proposition}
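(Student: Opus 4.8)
The plan is to build on the semi-explicit description of the Perron eigenvectors obtained in the proof of \Cref{prop-eig}, and to turn the desired bound on $d_H$ into an elementary pointwise comparison of the two densities over the compact age intervals $[0,x_E^*]$ and $[0,x_I^*]$. First I would recall that, applying~\eqref{e-explicit} with $\control=\control^i$ for $i=1,2$, the eigenvector $\bar n^i=(\bar n^i_E,\bar n^i_I)$ satisfies $\bar n^i_E(x)=\exp(-F^{\lambda_i}_{E\to I}(x))\,\bar n^i_E(0)$ on $[0,x_E^*]$ and $\bar n^i_I(y)=\exp(-F^{\lambda_i}_{I\to R}(y))\,\bar n^i_I(0)$ on $[0,x_I^*]$. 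The crucial feature is that, since $F^\lambda_{E\to I}(x)=\lambda x+\int_0^xK_{E\to I}$ and $F^\lambda_{I\to R}(y)=\lambda y+\int_0^yK_{I\to R}$, the dependence on $\lambda$ of $\log\bar n^i_E(x)$ (resp.\ $\log\bar n^i_I(y)$) is exactly $-\lambda x$ (resp.\ $-\lambda y$) up to a term not involving $\lambda$. Using that $d_H$ is invariant under a positive rescaling of either of its arguments, I would normalize $\bar n^1_E(0)=\bar n^2_E(0)=1$, and assume without loss of generality $\lambda_1\le\lambda_2$, setting $\Delta:=\lambda_2-\lambda_1\ge 0$. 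Then $\bar n^1_E(x)/\bar n^2_E(x)=\exp(\Delta x)$ on $[0,x_E^*]$, and $\bar n^1_I(y)/\bar n^2_I(y)=\exp(\Delta y)\,\bar n^1_I(0)/\bar n^2_I(0)$ on $[0,x_I^*]$.

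The main obstacle is that the above normalization fixes only the joint scaling of each eigenvector, so the ``inter-compartment offset'' $\bar n^1_I(0)/\bar n^2_I(0)$ still has to be controlled (the ratio $\bar n^i_I(0)/\bar n^i_E(0)$ is intrinsic to $\bar n^i$). For this I would use the boundary condition in~\eqref{e-eig2}, which with the normalization reads $\bar n^i_I(0)=A^{\lambda_i}$, where $A^\lambda:=\int_0^{x_E^*}K_{E\to I}(x)\exp(-F^\lambda_{E\to I}(x))\,\dd x+\exp(-F^\lambda_{E\to I}(x_E^*))$. Two remarks then finish this step: $\lambda\mapsto A^\lambda$ is non-increasing, because every term is built from $e^{-\lambda x}$ with $x\ge 0$, so $A^{\lambda_1}\ge A^{\lambda_2}$; and, for $0\le x\le x_E^*$, $\exp(-F^{\lambda_1}_{E\to I}(x))=e^{\Delta x}\exp(-F^{\lambda_2}_{E\to I}(x))\le e^{\Delta x_E^*}\exp(-F^{\lambda_2}_{E\to I}(x))$, whence $A^{\lambda_1}\le e^{\Delta x_E^*}A^{\lambda_2}$. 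Hence $1\le \bar n^1_I(0)/\bar n^2_I(0)\le e^{\Delta x_E^*}$.

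It then remains to assemble the estimates. On $[0,x_E^*]$ one has $1\le \bar n^1_E/\bar n^2_E=e^{\Delta x}\le e^{\Delta x_E^*}\le e^{\Delta(x_E^*+x_I^*)}$, and on $[0,x_I^*]$ one has $1\le \bar n^1_I/\bar n^2_I=e^{\Delta y}(\bar n^1_I(0)/\bar n^2_I(0))\le e^{\Delta x_I^*}e^{\Delta x_E^*}=e^{\Delta(x_E^*+x_I^*)}$. Since $\bar n^1$ and $\bar n^2$ are continuous and positive on compact intervals, these componentwise pointwise inequalities are precisely the order relations in $V=L^1([0,x_E^*])\times L^1([0,x_I^*])$, i.e.\ $\bar n^2\le \bar n^1\le e^{\Delta(x_E^*+x_I^*)}\bar n^2$. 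By the definition of Hilbert's projective metric this gives $d_H(\bar n^1,\bar n^2)\le \log e^{\Delta(x_E^*+x_I^*)}=\Delta(x_E^*+x_I^*)=|\lambda_1-\lambda_2|(x_E^*+x_I^*)$, which is the claim. The only minor points of care are reading $n_E(x_E^*)$ in~\eqref{e-eig2} as $\bar n_E(x_E^*)$, and noting that, $\psi$ being bounded and not a.e.\ zero and $\control^i>0$, \Cref{prop-eig} guarantees that both eigenvectors are strictly positive, so all the ratios above are well defined.
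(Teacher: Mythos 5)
Your proposal is correct and follows essentially the same route as the paper: normalize $\bar n^i_E(0)=1$, use the semi-explicit formula~\eqref{e-explicit} to see that the $\lambda$-dependence of the log-densities is exactly $-\lambda x$, propagate the resulting factor $e^{\Delta x_E^*}$ through the boundary condition~\eqref{e-eig2} to control $\bar n^1_I(0)/\bar n^2_I(0)$, and conclude $\bar n^2\le\bar n^1\le e^{\Delta(x_E^*+x_I^*)}\bar n^2$. The only cosmetic difference is that you fix $\lambda_1\le\lambda_2$ where the paper keeps both orderings symmetric via the $(\cdot)^+$ notation; the content is identical.
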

The term $x^*_E+x^*_I$ is the maximal time
elapsed between contamination and the end of infectiosity.
\begin{proof}[Proof of \Cref{prop-bound}]
  Suppose, without loss of generality, that $\bar{n}^i_E(0)=1$
  for $i=1,2$.
  Let $F^\lambda_{E\to I}$ and $F^\lambda_{I\to R}$ be defined as in~\eqref{defFlambda}. We have $F^\lambda_{E\to I}(x)=\lambda x+  F^0_{E\to I}(x)$ and
  $F^\lambda_{I\to R}(x)= \lambda x+  F^0_{I\to R}(x)$.
  Then, \eqref{e-explicit} and the boundary condition in~\eqref{e-eig2}
  yield
  \begin{align}
  \bar n_E^i(x)&= \exp(-\lambda^i x)\exp\big( - F^0_{E\to I}(x)\big) \,,
    \text{ for } 0\leq x \leq x_E^*\,,\label{e-tob1}\\
    \bar n^i_I(0)& = \int_0^{x_E^*} K_{E\to I}(x)n^i_E(x)\dd x +  n_E^i(x_E^*)\,,\label{e-tob2}\\
\bar n_I^i(y)&= \exp(-\lambda^i y) \exp\big( - F^0_{I\to R}(y)\big) \bar n_I^i(0)\,,
    \text{ for } 0\leq y\leq  x_I^*\,.\label{e-tob3}
  \end{align}
  Let $j\in \{1,2\}$ be distinct from $i$, and set $t^+:=\max(t,0)$.
  Bounding $\exp(-\lambda^i x)$
   by $\exp(-\lambda^j x) \exp\big( (\lambda^j-\lambda^i)^+ x_E^*\big)$
   in~\eqref{e-tob1},  we obtain:
   \[   \bar n_E^i(x)\leq \exp\big( (\lambda^j-\lambda^i)^+ x_E^*\big) \bar n_E^j
   (x)\,, \text{ for } 0\leq x \leq x_E^*\enspace .\]
   Applying this inequality in~\eqref{e-tob2}, we deduce
   \[   \bar n_I^i(0)\leq \exp\big( (\lambda^j-\lambda^i)^+ x_E^*\big) \bar n_I^j (0)\enspace .\]
   Now applying this inequality and bounding $\exp(-\lambda^i y)$
   by $\exp(-\lambda^j y) \exp\big( (\lambda^j-\lambda^i)^+ x_I^*\big)$
   in~\eqref{e-tob3},
   we obtain:
   \begin{equation}\label{ine-nI}
     \bar n_I^i(y)\leq \exp\big( (\lambda^j-\lambda^i)^+ (x_E^*+x_I^*)\big) \bar n_I^j (y)\,,  \text{ for } 0\leq y\leq  x_I^*\enspace .\end{equation}
  So,
  \[
  d_H( (\bar n_E^i,\bar n_I^i), (\bar n_E^j,\bar n_I^j))
  \leq (\lambda^j-\lambda^i)^+ (x_I^*+x_E^*)+
  (\lambda^i-\lambda^j)^+ (x_I^*+x_E^*)
  = |\lambda^i-\lambda^j| (x_I^*+x_E^*) \, .\qedhere
  \]
\end{proof}

   Note that applying \eqref{ine-nI} to the boundary condition in~\eqref{e-Lotkaeig}, we also deduce from the previous proof that 
  \[
  \bar{n}_E^i (0) \leq \frac{\control^i}{\control^j}\exp\big( (\lambda^j-\lambda^i)^+(x_I^*+x_E^*)\big) \bar n_E^{j}(0)\,.
  \]
  Since   $\bar{n}_E^i (0)=1$ for $i=1,2$, we get the following bound
  \[
  \control^j/\control^i \leq \exp\big((\lambda^j-\lambda^i)^+\big)(x_I^*+x_E^*)
  \,,
  \]
  from which one recover that $\control$ is a nondecreasing function of $\lambda$, and which gives an estimation of the rate $\control$ in terms of the eigenvalue $\lambda$. 
  
  The bound of~\Cref{th-1}
may be refined. This is left for further work.

\section{Short term predictions}\label{sec-appendixC}\label{sec-proba}
We now describe the basic methodology we propose to build confidence intervals for future occurrences of medical events related to epidemic progression,  and raise alarms about its potential resurgence. We first consider a single time series of numbers of event occurrences. We then describe how to consolidate several time series corresponding to distinct medical events in order to construct improved alarm criteria. The simpler case of least squares fitting is considered first, the more robust $\ell_1$ alternative is described next.

\paragraph{Time series for a single type of events:} Let $X(1),\ldots,X(n)$ be indices of days, and we aim to do a forecast based on observations made on these days.
 Typically, on day $d_0$, we may select $n=7$ and let $X(1)=d_0-n,\ldots,X(n)=d_0-1$ to perform a forecast on the basis of the last seven days. Let $Y(t)$ denote the count of medical events (for instance, dispatches of ambulances) on day $X(t)$, and let $Z(t)=\log Y(t)$. Based on the previous discussion (epidemiological modeling) we assume that for all $t=1,\ldots,n$,
$$
Z(t)=\alpha + \beta X(t) +\epsilon_t
$$
for constants $\alpha$, $\beta$, where $\epsilon_t$ denotes some random noise. For simplicity we assume here i.i.d.\ noise sequence $\epsilon_1,\ldots,\epsilon_n$, and that each $\epsilon_t$ admits a Gaussian distribution $\cN(0,\sigma^2)$ with zero mean and variance $\sigma^2$. 

Least-square estimates for the parameters $\alpha$, $\beta$ are then provided by
\begin{equation}
\hat{\beta}=\frac{\sum_{t=1}^n (X(t)-\bar{X})(Z(t)-\bar{Z})}{\sum_{t=1}^n(X(t)-\bar{X})^2},\quad \hat{\alpha}=\bar{Z}-\hat{\beta}\bar{X},
\end{equation}
where 
\begin{equation}
\bar{X}=\frac{1}{n}\sum_{t=1}^nX(t),\quad \bar{Z}=\frac{1}{n}\sum_{t=1}^nZ(t).
\end{equation}
The variance $\sigma^2$ can be estimated as
\begin{equation}
\hat{\sigma}^2=\frac{1}{n-2}\sum_{t=1}^n(Z(t)-\hat{Z}(t))^2,
\end{equation}
where 
\begin{equation}
\hat{Z}(t):=\hat{\alpha}+\hat{\beta}X(t).
\end{equation}
Under the assumptions of i.i.d.\ Gaussian errors $\epsilon_t$, we have that, for each $t$ corresponding to a future day $X(t)$ (in particular, $t\notin\{1,\ldots,n\}$), the three following variables:
$$
\displaystyle \frac{\hat{\alpha}-\alpha}{\hat{\sigma}\sqrt{\frac{1}{n}+\frac{\bar{X}^2}{\sum_{t'=1}^n (X(t')-\bar{X})^2}}},\;
\frac{\hat{\beta}-\beta}{\hat{\sigma}\sqrt{\frac{1}{\sum_{t'=1}^n(X(t')-\bar{X})^2}}},\;\;,
\frac{Z(t)-\hat{Z}(t)}{\hat{\sigma}\sqrt{1+\frac{1}{n}+\frac{(X(t)-\bar{X})^2}{\sum_{t'=1}^n(X(t')-\bar{X})^2}}},
$$
all admit a bilateral Student distribution with $n-2$ degrees of freedom (see~\cite[Ch. 28]{johnson1994continuous} or~\cite{cramer1999mathematical}). Denote by $t^{n-2}_{\gamma}$ the $\gamma$-th quantile of this distribution. For $\epsilon \in [0,1]$, this provides us with the following confidence intervals with confidence $1-\epsilon$:
\begin{equation}
\begin{array}{lll}
\alpha&\in& \left[\hat{\alpha}-\hat{\sigma}\sqrt{\frac{1}{n}+\frac{\bar{X}^2}{\sum_{t'=1}^n (X(t')-\bar{X})^2}} t^{n-2}_{1-\epsilon/2},\hat{\alpha}+\hat{\sigma}\sqrt{\frac{1}{n}+\frac{\bar{X}^2}{\sum_{t'=1}^n (X(t')-\bar{X})^2}} t^{n-2}_{1-\epsilon/2}\right],\\
\beta &\in& \left[\hat{\beta}-\hat{\sigma}\sqrt{\frac{1}{\sum_{t'=1}^n(X(t')-\bar{X})^2}}t^{n-2}_{1-\epsilon/2},\hat{\beta}+\hat{\sigma}\sqrt{\frac{1}{\sum_{t'=1}^n(X(t')-\bar{X})^2}}t^{n-2}_{1-\epsilon/2} \right]\\
Z(t)&\in& \left[\hat{Z}(t)-\hat{\sigma}\sqrt{1+\frac{1}{n}+\frac{(X(t)-\bar{X})^2}{\sum_{t'=1}^n(X_(t')-\bar{X})^2}}t^{n-2}_{1-\epsilon/2}, \hat{Z}(t)+\hat{\sigma}\sqrt{1+\frac{1}{n}+\frac{(X(t)-\bar{X})^2}{\sum_{t'=1}^n(X(t')-\bar{X})^2}}t^{n-2}_{1-\epsilon/2} \right]
\end{array}
\end{equation}
As an illustration, for $n=7$ and $\epsilon=5\%$, we can plug in $t^{5}_{0.975}=2.571$ in the last interval, and thus obtain a $95\%$-confidence interval centered around $\hat{Z}(t)$ for $Z(t)=\log Y(t)$, the logarithm of the count $Y(t)$ on a future day $X(t)$, that is:
\begin{equation}
Z(t)\in \left[\hat{Y}(t)-2.571\times\hat{\sigma}\sqrt{1+\frac{1}{n}+\frac{(X(t)-\bar{X})^2}{\sum_{t'=1}^n(X(t')-\bar{X})^2}}, \hat{Z}(t)+2.571\times\hat{\sigma}\sqrt{1+\frac{1}{n}+\frac{(X(t)-\bar{X})^2}{\sum_{t'=1}^n(X(t')-\bar{X})^2}}\right]
\end{equation}

Although we could extend this definition of the confidence interval for the short terms predictions of the value of $Z(t)$, we propose a more conservative confidence domain, in the shape of a trapezoid. It is obtained by extending the upper-bound $Z(t)^+$ (resp.\ the lower bound $Z(t)^-$ of the $95\%$ confidence interval on $Z(t)$ by a line with slope equal to the upper-bound $\beta^+$ (resp. lower-bound $\beta^-$) of $95\%$ confidence interval on $\beta$. For a given day $t$, the upper and lower envelopes of the trapezoid have ordinates

\[      \bigg(\hat{\beta}(t-t_n)+\hat{Z}_n\bigg)\pm \displaystyle\left(\sqrt{\mathrm{Var}(\hat{\beta})}(t-t_n)+\sqrt{\widehat{\sigma}^2+\mathrm{Var}(\hat{Z}_n)}\right) t^{n-2}_{1-\epsilon/2}\enspace.
\]

If instead of the count $Y(t)$ on a particular day $X(t)$, we are interested in the trend of the epidemic, whether exploding or contracting, we should then consider the confidence interval for parameter $\beta$. Again for $n=7$ and $\epsilon=5\%$ this gives
\begin{equation}
\beta \in \left[\hat{\beta}-2.571\times\hat{\sigma}\sqrt{\frac{1}{\sum_{t'=1}^n(X(t')-\bar{X})^2}},\hat{\beta}+2.571\times\hat{\sigma}\sqrt{\frac{1}{\sum_{t'=1}^n(X(t')-\bar{X})^2}} \right]\\
\end{equation}
One-sided confidence intervals may also be provided, and are in fact more natural for the definition of alarm indicators. 

For concreteness, assume we want to raise an alarm when the doubling time, $\delta=(\log 2)/\beta$, is $\delta^*$ days or less, where $\delta^*$ could be 10 for instance. This is equivalent to $\beta$ exceeding $(\log 2)/\delta^*$. Thus $\delta$ is less than $\delta^*$ days with confidence $1-\epsilon$ when 
$$
\frac{\log 2}{\delta^*} < \hat{\beta}- t^{n-2}_{1-\epsilon} \sqrt{V},
$$
where 
$$
V=\hat{\sigma}\sqrt{\frac{1}{\sum_{t'=1}^n(X(t')-\bar{X})^2}}.
$$
Raising an alarm under this condition then amounts to calibrating the false positive probability at $\epsilon$. For instance, for $\epsilon = 5\%$, and $n=7$, we would plug in $t^7_{0.95}=2.015$ in the above expression.

Alternatively, raising an alarm under the condition
$$
\frac{\log 2}{\delta^*} < \hat{\beta}+t^{n-2}_{1-\epsilon} \sqrt{V},
$$
 corresponds to calibrating the false negative probability (probability of not raising an alarm while $\delta\le 10$) at $\epsilon$.

Our alarm indicators correspond to the first choice, i.e. calibration of a false positive rate, with $\delta^*$ set to $+\infty$.

\paragraph{Alarm indicators based on multiple types of events:} Assume that several types $j$ of events are available, and let $J$ denote the corresponding set of events. For instance, we could distinguish between dispatches of ambulances bringing patients to Intensive Care Units as opposed to Non-intensive Care Units, thereby producing two distinct time series. Let $X_j(t)$, $t=1,\ldots,n_j$ denote the days on which counts $Y_j(t)$ of type $j$ event occurrences are to be used. Let $Z_j(t)=\log Y_j(t)$. We assume as before the linear regression model
$$
Z_j(t)= \alpha_j +\beta_j X_j(t)+\epsilon_j(t),\; t=1,\ldots,n^j \enspace.
$$
Now for each of these times series, we can produce, based on the previous discussion, the estimator 
$$
\hat{\beta}_j:=\frac{\sum_{t=1}^{n_j} (X_j(t)-\bar{X}_j)(Z_j(t)-\bar{Z}_j)}{\sum_{t=1}^{n_j}(X_j(t)-\bar{X}_j)^2},
$$
where 
$$
\bar{X}_j=\frac{1}{n_j}\sum_{t=1}^{n_j}X_j(t),\quad \bar{Z}_j=\frac{1}{n_j}\sum_{t=1}^{n_j}Z_j(t).
$$
Suppose in addition that the noise terms $\epsilon_j(t)$ are mutually independent, Gaussian, with zero mean and variance $\sigma_j^2$ for errors $\epsilon_j(t)$. 
Suppose finally that the exponents $\beta_j$ all coincide with $\beta$, the exponent that is characteristic of the epidemic's progression. Denote by 
\begin{equation}
V_j:=\hat{\sigma}_j^2\sqrt{\frac{1}{\sum_{t=1}^{n_j}(X_j(t)-\bar{X}_j)^2}},
\end{equation}
where, reproducing the computations for a single time series, we let
$$
\hat{\sigma}_j^2:=\frac{1}{n_j-2}\sum_{t=1}^{n_j}(Z_j(t)-\hat{Z}_j(t))^2,
$$
and
$$
\hat{Z}_j(t):=\hat{\alpha}_j+\hat{\beta}_jX_j(t).
$$
As previously, $V_j$ is our estimate of the variance of estimate $\hat{\beta}_j$. We finally propose to combine the individual estimators $\hat{\beta}_j$ into
\begin{equation}
\hat{\beta}:=\frac{\sum_{j\in J} \frac{1}{V_j}\hat{\beta}_j}{\sum_{j\in J}\frac{1}{V_j}}\cdot
\end{equation}
For the sake of simplicity, let us approximate the bilateral Student distribution with $n-2$ degrees of freedom by the standard distribution $\cN(0,1)$. We then have the approximate distributions $\hat{\beta}_j\approx \cN(\beta, V_j)$, and hence the approximate distribution $\hat{\beta}-\beta \approx \cN(0,V)$,
\begin{equation}
V:=\frac{1}{\sum_{j\in J} \frac{1}{V_j}}\cdot
\end{equation}
Weighing  the individual estimators $\hat \beta _j$ by the reciprocal of their variances as just done minimizes the variance of the resulting estimator. The same approach as previously considered then leads to the following conditions for alarm raising:

To raise an alarm when the doubling time $\delta=(\log 2)/\beta$ exceeds $\delta^*$ days (e.g., $\delta^*=10$), if we target a false alarm probability of $\epsilon$, we are led to raise an alarm when Condition
\begin{equation}
\frac{\log 2}{\delta^*}< \hat{\beta} -g_{1-\epsilon}\sqrt{V},
\end{equation}
where $g_{1-\epsilon}$ is the $1-\epsilon$-quantile of the standard Gaussian distribution.

If instead we target a false negative probability (probability of not raising an alarm) at $\epsilon$, we would then raise an alarm when 
\begin{equation}
\frac{\log 2}{\delta^*}< \hat{\beta}+g_{1-\epsilon} \sqrt{V},
\end{equation}

\paragraph{More robust $\ell_1$-based approach:} The previous estimators and derived alarm conditions have the appeal of simplicity, but can be advantageously replaced by more robust versions, that are less sensitive to the presence of outliers. 

A popular alternative is the following $\ell_1$ criterion. We again consider $Z_j(t):=\log Y_j(t)$, where $Y_j(t)$ is the number of type $j$ events on day $X_j(t)$. We then let $\hat\alpha_j$, $\hat\beta_j$ achieve the minimum of the criterion $\sum_{t=1}^{n^j} |\alpha+\beta X^j_t -Y^j_t|$. 
They are obtained by solving a linear program.
Here we assume that observations $Z_j(t)$ are mutually independent and distributed according to density 
$f_{j,t}(z)=\frac{1}{2\lambda^j}\exp(-|z-\alpha_j-\beta_j X_j(t)|/\lambda_j)$. In other words this corresponds to adding a Laplace observation noise with density $\frac{1}{2\lambda^j}\exp(-|z|/\lambda_j)$ to the signal of interest $\alpha_j+X_j(t)\beta_j$. The above $\ell_1$ minimization criterion corresponds to maximum likelihood estimation of $\alpha_j$, $\beta_j$ in this observational noise model, as its log-likelihood is given by
$$
-|T|\log (2 \lambda_j) -\sum_{t=1}^{n_j}\frac{|Z_j(t)-\alpha_j-\beta_j X_j(t)|}{\lambda_j}.
$$

A rich theory for the performance of the resulting estimators is available, see for instance~\cite{bassett-koenker}. The latter work treats general i.i.d.\ errors, and do not restrict itself to e.g.\ Laplacian distribution of errors; recent work like~\cite{stangenhaus1993bootstrap} experiments techniques to obtain confidence intervals when distribution of errors is unknown. Here we make the choice of Laplace-distributed errors for sake of simplicity.
In particular, the asymptotic theory in \cite{bassett-koenker} suggests the approximation 
$$
\hat{\beta}_j\sim \cN( \beta_j,V_j)
$$ where 
\begin{equation}
\hat{\lambda}_j:=\frac{1}{n_j}\sum_{t=1}^{n_j}|Z_j(t)-\hat\alpha_j-\hat\beta_j X_j(t)|,
\end{equation}
and
\begin{equation}
 V_j:=(\hat\lambda_j)^2\frac{1}{\sum_{t=1}^{n_j}X_j(t)^2 -\frac{1}{n_j}(\sum_{t=1}^{n_j} X_j(t))^2}\cdot
\end{equation}
We again consider that multiple types $j\in J$ of time series are conjointly available, and that each $\beta^j$ coincides with $\beta$, the parameter to be estimated.
Assuming the $\hat\beta_j$ to be independent with $\hat\beta\sim \cN( \beta_j,V_j)$, leads us to define the estimator
\begin{equation}
\hat\beta:=\frac{\sum_{j\in J}\frac{\hat\beta_j}{(\hat{\lambda}_j)^2}}{\sum_{j\in J}\frac{1}{(\hat{\lambda}_j)^2}},
\end{equation}
whose distribution is then given by $\hat{\beta}\sim \cN(\beta, V)$ where
\begin{equation}
V=\frac{1}{\sum_{j\in J}\frac{\sum_{t=1}^{n_j} X_j(t)^2 -(\sum_{t=1}^{n_j}X_j(t))^2/n_j}{(\hat{\lambda}_j)^2} }\cdot
\end{equation}
A symmetric $(1-\epsilon)$-confidence interval for $\beta$ is then provided by
\begin{equation}
\beta \in I:=\left[\hat{\beta}-g_{1-\epsilon/2} \sqrt{V},\hat{\beta}+g_{1-\epsilon/2} \sqrt{V}\right].
\end{equation}
Similarly, $(1-\epsilon)$-confidence one-sided intervals for $\beta$ are obtained by letting
\begin{equation}
\beta \in I':=[\hat{\beta}-g_{1-\epsilon} \sqrt{V},+\infty), \; \beta \in I'':=(-\infty,\hat{\beta}+g_{1-\epsilon}\sqrt{V}].
\end{equation}
The doubling time $\delta$ is given by $(\log 2)/\beta$ if $\beta>0$, and $+\infty$ otherwise. This gives the $1-\epsilon$-confidence conditions for $\delta$:
\begin{equation}
\hbox{if } \hat{\beta}-g_{1-\epsilon} \sqrt{V}>0,\; \delta\in I_1=\left[0,\frac{\log 2}{\hat{\beta}-g_{1-\epsilon} \sqrt{V}}\right],
\end{equation}
and
\begin{equation}
\delta\in I_2=\left[\frac{\log 2}{\max(0,\hat{\beta}+g_{1-\epsilon} \sqrt{V})},+\infty\right).
\end{equation}
For concreteness assume we want to raise an alarm when $\delta$ is $\delta^*$ days or less, where $\delta^*$ could be 10. 
From the above consideration, $\delta$ is below $\delta^*$ days with confidence $1-\epsilon$ when 
$$
\frac{\log 2}{\delta^*} < \hat{\beta}-g_{1-\epsilon} \sqrt{V}.
$$
Raising an alarm under this condition then amounts to calibrating the false positive probability at $\epsilon$. 

Alternatively, we may consider to raise an alarm under the condition
$$
\frac{\log 2}{\delta^*} < \hat{\beta}+g_{1-\epsilon} \sqrt{V}.
$$
This would correspond to calibrating the false negative probability (probability of not raising an alarm while $\delta\le \delta^*$) at $\epsilon$.

\section{Conclusion}
We have shown that monitoring of emergency calls to EMS allows to anticipate
the evolution of an epidemic by providing several {\em early signals},  each with specific characteristics in terms of time lag and  reliability.

Our study illustrates the spatially differentiated nature of the epidemic kinetics, with significant doubling time differences between neighboring departments. 

Such spatial differentiation, if present at a granularity finer than that of departments considered here, could be exploited using the methods described in the present work in order to detect potential epidemic resurgences at the corresponding spatial granularity. This shows great promise in enabling detection of so-called epidemic clusters.

There is thus huge potential in the extension of this work and its application to finer spatial resolution. 

Notwithstanding such extensions, monitoring epidemic kinetics through EMS calls at regional levels can already be exploited to define region-specific sanitary measures, such as lifting of travel bans, proportionate to the regional situation, and to allow early detection of epidemic resurgence. Importantly, we expect this finding to be applicable in full generality to EMS organizations worldwide. Thus the methods introduced here may be of wide applicability to combat \covid-19. Beyond \covid-19, EMS organizations have a unique role to play in early detection of sanitary crises. 

\section{Acknowledgments}
We thank the operational team of DSI of AP-HP, who helped to extract information records, especially Stéphane Crézé, Laurent Fontaine,
Pierre Cabot, François Planeix, Fabrice Tordjman, Grégory Terrell
and Martine Spiegelmann.

We thank Pr.\ Renaud Piarroux for very helpful remarks.
We thank Pr.\ Bruno Riou for his suggestion to include
quantitative statistical estimates in the present article.
We thank Pr.\ Frédéric Batteux for having provided
epidemiological information.
We thank Dr.\ François Braun (SAMU 57) and Dr.\ Vincent Bounes (SAMU 31) for providing comparison elements between their departments. We thank Dr.\ Nicolas Poirot for introducing us to SAMU 31.
We thank Dr.\ Paul-Georges Reuter (SAMU 92) for useful comments on the interpretation of SAMU data relative to the Covid crisis.

We thank Ayoub Foussoul, for having developed
a robust dynamic programming algorithm, allowing
one to consolidate the results of this manuscript
concerning the best piecewise linear approximation
of the log of observables.
We thank Jérôme Bolte, for providing insights
on non-convex and non-smooth best-approximation problems.

We thank Tania Lasisz for her help in the administration of the project, and Guillermo Andrade Barroso, Thomas Calmant and Matthieu Simonin for their contribution to software development.

We thank NXO France Integrator of communication solutions team and SIS Centaure15 solution from GFI World team for the help they provided and
their availability for the project.

We thank Orange Flux Vision (especially Jean-Michel Contet)
for having provided
daily population estimates, at the scale of the department,
helping to calibrate our models.

We thank Enedis (especially Pierre Gotelaere and his team) for having
provided an estimation of the departure rate
of households, aggregated at the scale of departments and districts,
helping us to refine our model.

We thank SFR Geostatistic Team (especially Loic Lelièvre)
for having provided estimates of flows
between Paris and province, aggregated at the scale
of departments and districts, allowing us
to incorporate mobility in our model.

Stéphane Gaubert thanks Nicolas Bacaër for a decisive
help, concerning epidemiological and mathematical analysis,
provided during the week of March 16th-20th.
He thanks Cormac Walsh for improvements of the text. He also
thanks Thomas Lepoutre for very helpful mathematical comments and suggestions
concerning~\Cref{sec-pde}.

The INRIA--École polytechnique team thanks the Direction de Programme de la Plate Forme d'Appels
d'Urgences -- PFAU at Préfecture de Police, DOSTL (Régis Reboul), and Brigade de Sapeurs
Pompiers de Paris (especially Gen.\ Jean-Marie Gontier and Capt.\ Denis Daviaud)
for having provided precious elements of comparison concerning the calls
received at the emergency numbers 17-18-112.

\nocite{bacaer}

\begin{thebibliography}{10}

\bibitem{plaguerats}
D.~M. Anderson, N.~A. Ciletti, H.~Lee-Lewis, D.~Elli, J.~Segal, K.~L. DeBord,
  K.~A. Overheim, M.~Tretiakova, R.~R. Brubaker, and O.~Schneewind.
\newblock Pneumonic plague pathogenesis and immunity in brown {N}orway rats.
\newblock {\em Am. J. Pathol.}, 174(3):910--921, 2009.

\bibitem{cholera}
Andrew~S. Azman, Kara~E. Rudolph, Derek A.~T. Cummings, and Justin Lessler.
\newblock The incubation period of cholera: A systematic review.
\newblock {\em J. Infect.}, 66(5):432--438, 2013.

\bibitem{bacaer}
N.~Bacaer.
\newblock Un mod{\`e}le math{\'e}matique des d{\'e}buts de l'{\'e}pid{\'e}mie
  de coronavirus en {F}rance.
\newblock hal-02509142, 2020.

\bibitem{baker2007multiple}
D.~J Baker, C.~T\'elion, and P.~Carli.
\newblock Multiple casualty incidents: the prehospital role of the
  anesthesiologist in europe.
\newblock {\em Anesthesiology clinics}, 25(1):179--188, 2007.

\bibitem{bansaye}
V.~Bansaye, B.~Cloez, and P.~Gabriel.
\newblock Ergodic behavior of non-conservative semigroups via generalized
  doeblin’s conditions.
\newblock {\em Acta Applicandae Mathematicae}, 166:29--72, 2020.

\bibitem{bellman}
R.~Bellman and R.~Roth.
\newblock Curve fitting by segmented straight lines.
\newblock {\em J. Am. Stat. Assoc.}, 64(327):1079--1084, 1969.

\bibitem{berman}
A.~Berman and R.J. Plemmons.
\newblock {\em Nonnegative matrices in the mathematical sciences}.
\newblock Academic Press, 1979.

\bibitem{cala}
G.~C. {Calafiore}, S.~{Gaubert}, and C.~{Possieri}.
\newblock Log-sum-exp neural networks and posynomial models for convex and
  log-log-convex data.
\newblock {\em IEEE Transactions on Neural Networks and Learning Systems},
  31(3):827--838, 2020.

\bibitem{CS-Chang2020}
Yi-Cheng Chen, Ping-En Lu, Cheng-Shang Chang, and Tzu-Hsuan Liu.
\newblock A time-dependent {SIR} model for {COVID}-19 with undetectable
  infected persons.
\newblock arXiv:2003.00122, 2020.

\bibitem{cgq02}
G.~Cohen, S.~Gaubert, and J.-P. Quadrat.
\newblock Duality and separation theorems in idempotent semimodules.
\newblock {\em Linear Algebra and Appl.}, 379:395--422, 2004.

\bibitem{cramer1999mathematical}
H.~Cram{\'e}r.
\newblock {\em Mathematical methods of statistics}, volume~43.
\newblock Princeton university press, 1999.

\bibitem{arsgrandest}
Agence~R\'egionale de~Sant\'e Grand~Est.
\newblock Coronavirus {C}ovid 19 en {G}rand {E}st: Point de situation.
\newblock Press release of March 8\th, available from
  \url{https://www.grand-est.ars.sante.fr/system/files/2020-03/Covid19_point_GrandEst080320.pdf},
  2020.

\bibitem{colizza}
L.~Di Domenico, G.~Pullano, Ch.~E. Sabbatini, P.-Y. Bo{\"e}lle, and V.~Colizza.
\newblock Expected impact of lockdown in {Î}le-de-{F}rance and possible exit
  strategies.
\newblock Report \#9 , www.epicx-lab.com/covid-19.html, 2020.

\bibitem{dynkin}
E.B. Dynkin.
\newblock Boundary theory of {M}arkov processes (the discrete case).
\newblock {\em Russian Math. Surveys}, 24(7):1--42, 1969.

\bibitem{SPF}
Sant\'e~Publique France.
\newblock Données hospitalières relatives à l'épidémie de {C}ovid-19,
  2020.
\newblock
  \url{https://www.data.gouv.fr/fr/datasets/donnees-hospitalieres-relatives-a-lepidemie-de-covid-19/},
  Retrieved on May 10th, 2020.

\bibitem{gaubertquIEuli2013}
S.~Gaubert and Z.~Qu.
\newblock Dobrushin's ergodicity coefficient for {M}arkov operators on cones.
\newblock {\em Integral Equations and Operator Theory}, 81(1):127--150, 2015.

\bibitem{hansen}
N.~Hansen and A.~Ostermeier.
\newblock Completely derandomized self-adaptation in evolution strategies.
\newblock {\em Evolutionary Computation}, 9(2):159--195, 2001.

\bibitem{hirsch2015medical}
M.~Hirsch, P.~Carli, R.~Nizard, B.~Riou, B.~Baroudjian, Th. Baubet, V.~Chhor,
  Ch. Chollet-Xemard, N.~Dantchev, N.~Fleury, J.-P. Fontaine, Y.~Yordanov,
  M.~Raphael, C.~Paugam-Burtz, L.~Lafont, and health professionals~of AP-HP.
\newblock The medical response to multisite terrorist attacks in paris.
\newblock {\em The Lancet}, 386(10012):2535--2538, 2015.

\bibitem{Insee}
Insee.
\newblock M{\'e}nages selon la taille en 2016. {C}omparaisons r{\'e}gionales et
  d{\'e}partementales, 2019.
\newblock \url{https://www.insee.fr/fr/statistiques/2012714}.

\bibitem{itenberg}
I.~Itenberg, G.~Mikhalkin, and E.~Shustin.
\newblock {\em Tropical algebraic geometry}.
\newblock Oberwolfach seminars. Birkh\"auser, 2007.

\bibitem{johnson1994continuous}
N.~L. Johnson, S.~Kotz, and N.~Balakrishnan.
\newblock {\em Continuous univariate distributions}.
\newblock Wiley, New York, 1994.

\bibitem{kermack-mckendrick}
W.~O Kermack and A.~G. McKendrick.
\newblock A contribution to the mathematical theory of epidemics.
\newblock {\em Proc. R. Soc. Lond. A}, 115:700--721, 1927.

\bibitem{bassett-koenker}
R.~W Koenker and G.~Bassett.
\newblock Regression quantiles.
\newblock {\em Econometrica}, 46(1):33--50, 1978.

\bibitem{lagarias}
J.~C. Lagarias, J.~A. Reeds, M.~H. Wright, , and P.~E. Wright.
\newblock Convergence properties of the {N}elder-{M}ead simplex method in low
  dimensions.
\newblock {\em SIAM Journal of Optimization}, 9(1):112--147, 1998.

\bibitem{laueretal}
S.~A. Lauer, K.~H. Grantz, Q.~Bi, F.~K. Jones, Q.~Zheng, H.~R. Meredith, A.~S.
  Azman, N.~G. Reich, and J.~Lessler.
\newblock The incubation period of coronavirus disease 2019 (covid-19) from
  publicly reported confirmed cases: Estimation and application.
\newblock {\em Annals of Internal Medicine}, 172(9):577--582, 2020.
\newblock PMID: 32150748.

\bibitem{nussbaumlemmens}
B.~Lemmens and R.~Nussbaum.
\newblock {\em Nonlinear {P}erron-{F}robenius Theory}, volume 189 of {\em
  Cambridge Tracts in Mathematics}.
\newblock Cambridge University Press, May 2012.

\bibitem{magaldelay}
Z.~Liu, P.~Magal, O.~Seydi, and G.~Webb.
\newblock A covid-19 epidemic model with latency period.
\newblock {\em Infectious Disease Modelling}, 5:323--337.

\bibitem{magalbiology}
Z.~Liu, P.~Magal, O.~Seydi, and G.~Webb.
\newblock Understanding unreported cases in the 2019-ncov epidemic outbreak in
  wuhan, china, and the importance of major public health interventions.
\newblock {\em Biology}, 9(50), 2020.

\bibitem{crepey}
C.~Massonnaud, J.~Roux, and P.~Cr\'epey.
\newblock Covid-19: Forecasting short term hospital needs in {F}rance.
\newblock Report available from Sfar.org, 2020.

\bibitem{MMP}
P.~Michel, S.~Mischler, and B.~Perthame.
\newblock General relative entropy inequality: an illustration on growth
  models.
\newblock {\em J. Math. Pures et Appl.}, 84(9):1235--1260, May 11 2005.

\bibitem{MischlerScher}
S.~Mischler and J.~Scher.
\newblock Spectral analysis of semigroups and growth-fragmentation equations.
\newblock {\em Annales de l'Institut Henri Poincare (C) Non Linear Analysis},
  33(3):849 -- 898, 2016.

\bibitem{papadopoulos}
A.~Papadopoulos and M.~Troyanov.
\newblock Weak {F}insler structures and the {F}unk weak metric.
\newblock {\em Math. Proc. Cambridge Philos. Soc.}, 147(2):419--437, 2009.

\bibitem{PerthameBook}
B.~Perthame.
\newblock {\em Transport equations in biology}.
\newblock Birkh{\"a}user, 2007.

\bibitem{salje}
H.~Salje, C.~Tran Kiem, N.~Lefrancq, N.~Courtejoie, P.~Bosetti, J.~Paireau,
  A.~Andronico, N.~Hoze, J.~Richet, C.-L. Dubost, Y.~Le Strat, J.~Lessler,
  D.~Bruhl, A.~Fontanet, L.~Opatowski, P.-Y. Bo{\"e}lle, and S.~Cauchemez.
\newblock Estimating the burden of {SARS}-{C}o{V}-2 in {F}rance.
\newblock pasteur-02548181, 2020.

\bibitem{stangenhaus1993bootstrap}
G.~Stangenhaus, S.~C. Narula, and F.~F. Pedro.
\newblock Bootstrap confidence intervals for the minimum sum of absolute errors
  regression.
\newblock {\em Journal of statistical computation and simulation},
  48(3-4):127--133, 1993.

\bibitem{virlojeux}
Virlogeux V., Fang~V. J., Park M., Wu~J. T., and Cowling~B. J.
\newblock Comparison of incubation period distribution of human infections with
  mers-cov in south korea and saudi arabia.
\newblock {\em Sci. Rep.}, 6(35839), 2016.

\bibitem{varia}
M.~Varia, S.~Wilson, Sh. Sarwal, A.~McGeer, E.~Gournis, Eleni Galanis,
  B.~Henry, and Hospital Outbreak~Investigation Team.
\newblock Investigation of a nosocomial outbreak of severe acute respiratory
  syndrome (sars) in toronto, canada.
\newblock {\em CMAJ}, 169(4):285--292, 2003.

\bibitem{viro}
O.~Viro.
\newblock Dequantization of real algebraic geometry on logarithmic paper.
\newblock In {\em European Congress of Mathematics, Vol. I (Barcelona, 2000)},
  volume 201 of {\em Progr. Math.}, pages 135--146. Birkh\"auser, Basel, 2001.

\bibitem{vonforster}
H.~von Forster.
\newblock Some remarks on changing populations.
\newblock In Jr. F.~Stohlman, editor, {\em The Kinetics of Cellular
  Proliferation}, pages 382--407. Grune \& Stratton, New York, 1959.

\bibitem{webb}
G.~Webb.
\newblock {\em Theory of Nonlinear Age-Dependent Population Dynamics}.
\newblock CRC Press, 1985.

\end{thebibliography}

\appendix
\section{Appendix: algorithms to compute a best approximation of the logarithm of the number of events by a piecewise linear map}
\label{appendix-2}
Given an epidemiologic observable $Y(t)$, we need
to approximate $\log Y(t)$ by a function
\[ \linear(t):=\min_{1\leq j\leq \nu}
(\lambda_j  t  + c_j)\,,
\]
where $\nu$ is the number of phases with constant sanitary policy during the considered time period. The parameters $\lambda_j$, $c_j$ are assumed without loss of generality to satisfy $\lambda_1\le\lambda_2\cdots\le \lambda_{\nu}$. The concavity constraint imposed on the approximating function $\linear(t)$ makes the problem different from standard function approximation problems, and contributes to the robustness of the fitting procedure by reducing the amount of overfitting.

The two most natural criteria for fitting function $\linear(t)$ to observations $\log Y(t)$ are to minimize either a least squares, or $\ell_2$ loss function $\sum_{t\in \mathcal{T}} |\linear(t)-\log Y(t)|^2$, or an $\ell_1$ loss function $\sum_{t\in\mathcal{T}} |\linear(t)-\log Y(t)|$, where $\mathcal{T}$ is a finite
set of time instants at which observations have been made. As discussed in \Cref{sec-proba}, the $\ell_1$ formulation is more robust in being less sensitive to outliers, and is the one used on~\Cref{p-phases}. 

The corresponding optimization problem over parameters $\lambda_i$, $c_i$ is non-convex as soon as $\nu \ge 2$. A straightforward option is to use a derivative free procedure,
like the Nelder-Mead~\cite{lagarias} algorithm. Depending on the initial
point, this algorithm
may converge to a local minimum, which may not be epidemiologically
significant. So, a possibility is to guide the algorithm
by providing it a initial guess of the optimal solution.
To do, we start by an a priori selection of the time periods over which function $\linear(t)$ is linear (which could be obtained by prior knowledge of delay parameters $\tau$ and times of policy changes, or found by brute force search). We then determine a minimum cost linear fit of target function $\log Y(t)$ over each such period, and use the concave envelope of the resulting function as our initial condition for local search.  This is how we initially obtained
the best $\ell_1$ approximation shown on~\Cref{p-phases}. We also
used CMA-ES for comparison~\cite{hansen}. Both Nelder-Mead
and CMA-ES algorithms appear to be sensitive to the initial conditions.
Notice in this respect that the objective function is linear on the cells
of a polyhedral complex and that it can be constant on certain unbounded cells
of this complex, so a local search
algorithm may be trapped in a cell in which the function is constant.
Another perspective is to observe that this best approximation
problem is equivalent to a learning problem, looking for the parameters of a neural networks with a single hidden layer and min-type activation functions, see~\cite{cala}. This allows one to apply (nonsmooth) optimization algorithms used
in learning, still leading in general to a local optimum.
An approach leading to the global optimum is dynamic programming,
originating from Bellman~\cite{bellman}. Ayoub Foussoul (École polytechnique)
provided us with a dynamic programming solver, implementing several refinements,
and allowing us to certify the global optimality of the
approximation shown in~\Cref{p-phases},
up to a fixed precision.

\end{document}